\documentclass[journal]{IEEEtran}

\usepackage{amsmath,amssymb,amsthm,mathtools,bm}
\usepackage{graphicx}
\usepackage{booktabs}
\usepackage{enumitem}
\usepackage{cite}
\usepackage[hidelinks]{hyperref}
\usepackage{cleveref}
\usepackage{algorithm}
\usepackage{algorithmic}
\usepackage{siunitx}
\usepackage{subcaption}
\usepackage{multirow}
\usepackage{kotex}
\newtheorem{theorem}{Theorem}
\newtheorem{lemma}{Lemma}
\newtheorem{corollary}{Corollary}
\newtheorem{proposition}{Proposition}
\theoremstyle{remark}

\newcommand{\C}{\mathbb{C}}
\newcommand{\Rr}{\mathbb{R}}
\newcommand{\E}{\mathbb{E}}

\newcommand{\diag}{\operatorname{diag}}
\newcommand{\CN}{\mathcal{CN}}

\usepackage{kotex}

\title{Fundamental Limits of CSI Compression \\in FDD Massive MIMO}

\author{Bumsu Park, Youngmok Park, Chanho Park, and Namyoon Lee%
\thanks{B. Park, Y. Park, C. Park, and N. Lee are with the Department of Electrical Engineering, POSTECH, Pohang, South Korea
(e-mail: \{bumsupark, ympark1999, chanho26, nylee\}@postech.ac.kr).}}

\begin{document}
\maketitle

 \begin{abstract}
Channel state information (CSI) feedback in frequency-division duplex (FDD) massive multiple-input multiple-output (MIMO) systems is fundamentally limited by the high dimensionality of wideband channels. Classical transform coding (TC) provides an optimal solution when the CSI follows a single correlated Gaussian distribution. In practice, however, CSI datasets exhibit multiple propagation regimes arising from variations in user location, scattering geometry, and blockage. As a result, a single covariance model is often mismatched to the true channel statistics. In this paper, we model the stacked wideband CSI vector as a Gaussian-mixture source with a latent geometry state that represents different propagation environments. Each component corresponds to a locally stationary regime characterized by a correlated proper complex Gaussian distribution with its own covariance matrix. This representation captures the multimodal nature of practical CSI datasets while preserving the analytical tractability of Gaussian models. Motivated by this structure, we propose Gaussian-mixture transform coding (GMTC), a practical CSI feedback architecture that combines state inference with state-adaptive TC. The mixture parameters are learned offline from channel samples and stored as a shared statistical dictionary at both the user equipment (UE) and the base station. For each CSI realization, the UE identifies the most likely geometry state, encodes the corresponding label using a lossless source code, and compresses the CSI using the Karhunen–Loève transform matched to that state. We further characterize the fundamental limits of CSI compression under this model by deriving analytical converse and achievability bounds on the rate–distortion (RD) function. A key structural result is that the optimal bit allocation across all mixture components is governed by a single global reverse-waterfilling level. Simulations on the COST2100 dataset show that GMTC significantly improves the RD tradeoff relative to neural transform coding approaches while requiring substantially smaller model memory and lower inference complexity. These results indicate that near-optimal CSI compression can be achieved through state-adaptive TC without relying on large neural encoders.

\end{abstract}

\begin{IEEEkeywords}
CSI feedback, massive MIMO, OFDM, Rate-distortion theory, transform coding, Gaussian-mixture model, channel compression.
\end{IEEEkeywords}

\section{Introduction}
In frequency division duplex (FDD) massive multiple-input multiple-output (MIMO) orthogonal frequency division multiplexing (OFDM) systems, the user equipment (UE) estimates the downlink channel and feeds channel state information (CSI) to the base station (BS). With $N_t$ antennas and $N_c$ subcarriers, the raw CSI dimension scales as $N=N_tN_c$, so the feedback load grows with both array size and bandwidth \cite{Love2008_Overview,Jindal2006,Caire2006,Lee2014,LimitedFB2,GPIP2020}. This scaling makes CSI feedback a bottleneck in wideband massive MIMO.

Conditioned on a fixed propagation geometry---angles, delays, and large-scale path powers---the stacked CSI vector is well approximated as a correlated proper complex Gaussian vector whose covariance is determined by these slowly varying parameters \cite{COST2100,3GPP38901,NYUSIM,AdhikaryJSDM,Caire2025}. For a single covariance, classical transform coding (TC) based on the Karhunen--Lo\`eve transform (KLT) and reverse-waterfilling achieves the rate--distortion (RD) optimum \cite{Caire2025,goyal2002theoretical,cover2006elements,GPIP2020}. Practical CSI datasets such as COST2100, however, pool samples across multiple geometries (e.g., users, locations, visibility regions, blockage states). The resulting distribution is multi-modal, and any single covariance (hence any single transform) is inherently mismatched \cite{COST2100,3GPP38901}. We therefore model CSI as a Gaussian-mixture, where each component represents a locally stationary propagation regime.

Under experimentally validated and theoretically grounded Gaussian-mixture models for CSI, we ask two fundamental questions: (i) what is the information-theoretic limit of CSI compression, and (ii) what practical compression architecture can approach this limit with low complexity and flexible rate adaptivity? This paper addresses both questions.

 \subsection{Related Work}

  Classical CSI feedback methods quantize dominant beam directions or precoding subspaces using predetermined codebooks, including Grassmannian constructions and standardized beam codebooks \cite{love2003grassmannian,Love2008_Overview,Jindal2006}. These approaches are elegant and analytically tractable, but they scale poorly with dimension and become increasingly mismatched in wideband massive MIMO-OFDM settings where channel correlation varies jointly across space and frequency \cite{AdhikaryJSDM,COST2100,3GPP38901}. Compressive-sensing-based methods exploit angular--delay sparsity to reduce feedback overhead \cite{kuo2012compressive,CSbasedCSI2017}, while covariance- or subspace-based methods leverage structured long-term statistics and low-dimensional projections \cite{haghighatshoar2018massive,khalilsarai2018fdd,yang2023structured}. Their performance, however, can be brittle when sparsity or low-rank assumptions are violated, or when the channel statistics drift across propagation regimes.

More recently, deep-learning-based CSI feedback has cast CSI compression as a learned source coding problem; see \cite{Guo2022_Overview} for a comprehensive overview. Early autoencoder-based designs, most notably CsiNet, demonstrated that learned encoders and decoders can outperform classical codebook feedback on benchmark datasets such as COST2100 \cite{deeplearning2018,COST2100}, and subsequent work improved both performance and practicality using lightweight convolutional architectures, attention mechanisms, and learned optimization strategies \cite{Lightweight_CSI_feedback_2021,Attention_CSI_Feedback_2022,CNN2020,Learningtopitmize_2025}. In parallel, the CSI feedback literature has increasingly adopted ideas from learned image compression by treating the complex CSI matrix as a two-channel image and applying a learned analysis transform, quantization module, probabilistic entropy model, and synthesis transform, i.e., the nonlinear transform coding (NTC) viewpoint \cite{balle2020nonlinear}. Building on this perspective, recent CSI compressors incorporate variable-rate and entropy-aware designs \cite{park2024multi,park2025transformer,Liotopoulos2025OFSQ}; for example, \cite{park2025transformer} combines a transformer-based encoder/decoder with a learned entropy model and multi-level scalar quantization, achieving strong RD performance on COST2100 while highlighting the associated resource tradeoffs. Beyond deterministic decoders, generative diffusion models have also emerged as powerful reconstruction engines in neural compression \cite{Theis2022GaussianDiffusion,Yang2024ConditionalDiffusionCompression}. In particular, \cite{Kim2025DiffusionCSI} proposes diffusion-model-based compression of MIMO CSI with side information, using fixed-rate encoding via a trainable codebook and a conditional reverse-diffusion decoder, while residual diffusion architectures have been developed for variable-rate CSI compression and joint source--channel coding \cite{Ankireddy2025ResidualDiffusion}. Although these generative approaches can deliver impressive RD gains, their iterative sampling procedures typically incur substantially higher decoder complexity than TC-based schemes.

At a more fundamental level, several recent works have used RD theory to study the limits of CSI feedback systems \cite{Caire2,Kim2025,Caire2025}. For example, the asymptotic behavior of the overall feedback distortion versus the downlink signal-to-noise ratio under finite training was characterized in \cite{Caire2}, while practical schemes based on entropy-coded scalar quantization and truncated Karhunen--Lo\`eve feedback were developed in \cite{Kim2025,Caire2025}. However, these analyses are built around a \emph{single} correlated Gaussian CSI model. In realistic wideband systems, CSI datasets are often \emph{multi-modal} because the underlying propagation geometry varies across users, locations, visibility regions, and blockage states. To the best of our knowledge, the \emph{RD characterization of CSI compression for such multi-modal channel statistics has not been developed before}. In this paper, we adopt a model-based perspective by representing a multi-modal massive MIMO channel dataset through a Gaussian-mixture. This approach not only facilitates the development of a user-friendly lossless-lossy hybrid feedback architecture but also provides, more importantly, the first tractable information-theoretic RD characterization for CSI compression for multi-modal CSI statistics.

\subsection{Contributions}
We summarize the main contributions.

\begin{itemize}[leftmargin=*,itemsep=2pt]



\item \textbf{Universal CSI data model:}
We present a comprehensive statistical model for CSI datasets, specifically representing the stacked wideband CSI vector as a Gaussian-mixture with an associated latent geometry label. Each component of this mixture captures a locally stationary propagation regime, characterized by a correlated proper complex Gaussian distribution, complete with its own covariance structure. Notably, the proposed CSI distribution is universal, enabling accurate modeling of any arbitrary multimodal source distribution, in accordance with the universal approximation theorem \cite{Bishop2006PRML}. Furthermore, Gaussian-mixtures are extensively employed in high-dimensional statistics, serving not only as universal approximators but also because they facilitating mathematically tractable problem-solving.

\item \textbf{Gaussian-Mixture transform coding for lossless-lossy hybrid CSI compression:}
 Building on the differential entropy structure of the Gaussian-mixture, we introduce Gaussian-mixture transform coding (GMTC), a UE-friendly CSI feedback architecture characterized by low complexity encoder and decoder along with flexible rate adaptivity. The mixture parameters are learned offline from channel samples and stored as a shared covariance dictionary at both the UE and the BS. For each CSI realization, the UE conducts a maximum a posteriori (MAP) geometry-state selection and entropy-codes the state label in a lossless manner. Once the state is selected, the UE applies lossy compression through a component-matched KLT on the instantaneous CSI vector. The transform coefficients are then quantized with a rate allocation determined by a single global reverse-waterfilling level, and the entropy-coded coefficient indices are fed back to the BS. The BS subsequently reconstructs the CSI by decoding the label and coefficients, followed by applying the inverse KLT using the shared dictionary. This process enables both rate-adaptive CSI feedback and accurate CSI estimation, enhancing the overall CSI feedback performance of the system.

\item \textbf{Information-theoretical RD bounds:}
 To characterize the fundamental limit of CSI compression under clustered channel statistics, we derive an analytical converse-achievability characterization for Gaussian-mixture CSI compression. This characterization encompasses a genie-aided lower bound (where the state is revealed at no cost) and a label-aware achievable upper bound that involves explicit label transmission. A significant structural outcome of this analysis is the identification of a global reverse-waterfilling rule, wherein a single water level governs the optimal rate or distortion allocation across all mixture components and their respective eigenmodes. The gap between the upper and lower bounds is confined to a maximum of the amortized state-label entropy. Moreover, this gap can diminish as the number of antennas or the channel coherence time increases.

\item \textbf{Empirical validation:} Simulations on synthetic Gaussian-mixture CSI and on COST2100 quantify the mismatch penalty of single covariance TC and show that GMTC approaches the oracle-label benchmark. At comparable modeling budgets, GMTC significantly improves the RD tradeoff relative to representative learned NTC coders while retaining TC-like encoder complexity. In the COST2100 experiments, GMTC achieves roughly a $10$ dB improvement in reconstruction quality at similar rates, with about a $4\times$ smaller memory footprint and approximately $1650\times$ lower inference complexity than shifted-window vision transformer (Swin-ViT) \cite{liu2021swin} based NTC codecs. These results indicate that approaching the near-optimal RD performance does not require heavy neural inference at the encoder; most of the gains can be obtained by state-adaptive TC with the reverse-waterfilling rate allocation optimized for the multi-modal CSI distribution. 
\end{itemize}

\section{Gaussian-Mixture CSI Model}
\label{sec:csi_channel_model}

This section introduces the statistical CSI model used throughout the paper. Our starting point is the widely used and measurement-supported observation that \emph{conditioned on a fixed local propagation geometry}, the stacked wideband CSI vector is well approximated as a correlated proper complex Gaussian random vector \cite{COST2100,3GPP38901,AdhikaryJSDM,NYUSIM,Han2024,Kimjy2025}. The corresponding correlation structure is induced by physical scattering through the angles, delays, and powers of dominant multipath components. We then lift this \emph{local} Gaussian model to a \emph{global} Gaussian-mixture model to capture the multi-modality that appears in practical CSI datasets when the underlying propagation geometry changes across users, locations, and blockage states.

\subsection{Geometry-Induced Correlated Gaussian CSI}
\label{subsec:geom_gaussian_channel}

We consider a standard downlink massive MIMO--OFDM model \cite{Caire2025}. The BS employs a uniform linear array (ULA) with $N_t$ half-wavelength-spaced antennas and transmits over $N_c$ OFDM subcarriers. Let
\begin{equation}
\mathbf H_k\in\C^{N_c\times N_t}
\end{equation}
denote the frequency-domain downlink channel matrix of user $k$, where the $n$th row corresponds to the $n$th subcarrier and the $m$th column corresponds to the $m$th BS antenna. To obtain a physically consistent covariance model with joint spatial--frequency correlation, we adopt a multipath representation with $L$ dominant paths \cite{Caire2025}:
\begin{equation}
\label{eq:instant_channel_matrix}
\mathbf H_k
=
\sum_{\ell=1}^{L}
g_{k,\ell}\,
\mathbf b(\tau_{k,\ell})\,
\mathbf a(\theta_{k,\ell})^{\mathsf T},
\end{equation}
where $g_{k,\ell}\sim\CN(0,\gamma_{k,\ell})$ is the complex gain of the $\ell$-th path, $\theta_{k,\ell}$ is its angle, and $\tau_{k,\ell}$ is its delay. The spatial steering vector of the $N_t$-antenna ULA is $\mathbf a(\theta)\in\C^{N_t}$ with entries
\begin{equation}
\label{eq:spatial_steering}
[\mathbf a(\theta)]_m
=
e^{j\pi(m-1)\sin(\theta)},
\qquad
m=1,\ldots,N_t,
\end{equation}
and the frequency-response vector across subcarriers is $\mathbf b(\tau)\in\C^{N_c}$ with entries
\begin{equation}
\label{eq:delay_vector}
[\mathbf b(\tau)]_n
=
e^{-j2\pi(n-1)\Delta_f \tau},
\qquad
n=1,\ldots,N_c,
\end{equation}
where $\Delta_f$ is the subcarrier spacing.

Vectorizing \eqref{eq:instant_channel_matrix} yields the stacked CSI vector
\begin{equation}
\label{eq:instant_channel_vec}
\mathbf h_k
=
\mathrm{vec}(\mathbf H_k)
=
\sum_{\ell=1}^{L}
g_{k,\ell}\,
\big(\mathbf a(\theta_{k,\ell})\otimes \mathbf b(\tau_{k,\ell})\big),
\end{equation}
where $\mathbf h_k\in\C^N$ and $N\triangleq N_tN_c$.

Define the (slowly varying) local geometry parameter
\begin{equation}
\mathcal G_k
\triangleq
\{(\theta_{k,\ell},\tau_{k,\ell},\gamma_{k,\ell})\}_{\ell=1}^{L}.
\end{equation}
Conditioned on $\mathcal G_k$, the path gains $\{g_{k,\ell}\}$ are jointly proper complex Gaussian and enter linearly in \eqref{eq:instant_channel_vec}. Therefore, the stacked CSI is itself proper complex Gaussian:
\begin{equation}
\label{eq:local_gaussian}
\mathbf h_k \mid \mathcal G_k
\sim
\CN(\mathbf 0,\mathbf R(\mathcal G_k)),
\end{equation}
with geometry-induced covariance
\begin{equation}
\label{eq:multipath_covariance}
\mathbf R(\mathcal G_k)
=
\sum_{\ell=1}^{L}
\gamma_{k,\ell}\,
\Big(
\mathbf a(\theta_{k,\ell})\mathbf a(\theta_{k,\ell})^{\mathsf H}
\Big)
\otimes
\Big(
\mathbf b(\tau_{k,\ell})\mathbf b(\tau_{k,\ell})^{\mathsf H}
\Big).
\end{equation}
The correlation matrix in \eqref{eq:multipath_covariance} defines the role of geometry explicitly: angles govern the spatial correlation, delays govern the frequency correlation, and path powers determine how energy is distributed across the joint spatial--frequency modes.

\vspace{0.1cm}
\noindent{\bf Remark 1 (Local Gaussianity versus global nonstationarity):}
The model \eqref{eq:local_gaussian} is accurate \emph{locally}, i.e., for a fixed geometry or over a short interval where the scattering configuration is effectively constant. In a realistic CSI dataset, however, $\mathcal G_k$ varies across users and time due to mobility, sector transitions, visibility-region changes, and blockage. Consequently, the unconditional CSI distribution obtained by pooling many samples is generally \emph{non-Gaussian} and often \emph{multi-modal}, so a single covariance matrix is too restrictive.

\subsection{Gaussian-Mixture Model across Propagation Geometries}
\label{subsec:gmm_geometry}

To capture geometry variation across CSI samples, we introduce a latent \emph{geometry state} random variable
\begin{equation}
C_k \in \{1,\dots,K\},
\end{equation}
where each state represents a distinct propagation regime (e.g., a visibility region or a characteristic angle--delay cluster configuration). Conditioned on $C_k=c$, we model the stacked CSI as
\begin{equation}
\label{eq:conditional_gmm_channel}
\mathbf h_k \mid (C_k=c)
\sim
\CN(\mathbf 0,\mathbf R_c),
\end{equation}
where $\mathbf R_c$ is the covariance associated with state $c$. A natural parametric form for $\mathbf R_c$ inherits the same geometry-driven structure as \eqref{eq:multipath_covariance}:
\begin{equation}
\label{eq:component_covariance}
\mathbf R_c
=
\sum_{\ell=1}^{L_c}
\gamma_{c,\ell}\,
\Big(
\mathbf a(\theta_{c,\ell})\mathbf a(\theta_{c,\ell})^{\mathsf H}
\Big)
\otimes
\Big(
\mathbf b(\tau_{c,\ell})\mathbf b(\tau_{c,\ell})^{\mathsf H}
\Big),
\end{equation}
where $\{(\theta_{c,\ell},\tau_{c,\ell},\gamma_{c,\ell})\}_{\ell=1}^{L_c}$ describe the representative multipath geometry of state $c$.

Let $\pi_c=\Pr(C_k=c)$ with $\pi_c>0$ and $\sum_{c=1}^{K}\pi_c=1$. The unconditional CSI distribution becomes the proper complex Gaussian-mixture
\begin{equation}
\label{eq:gmm_csi}
p(\mathbf h)
=
\sum_{c=1}^{K}
\pi_c\,
\CN(\mathbf h;\mathbf 0,\mathbf R_c).
\end{equation}

\vspace{0.1cm}
\noindent{\bf Remark 2 (Gaussian-mixtures  can approximate real CSI datasets accurately):}
A practical CSI dataset is typically formed by pooling channel realizations across many users, locations, and time instants. Even when each realization is locally well modeled by \eqref{eq:local_gaussian}, the underlying geometry $\mathcal G$ varies continuously across the dataset. Formally, this induces an (infinite) mixture of Gaussians of the form
\begin{align}
    p(\mathbf h)=\int \CN\!\big(\mathbf h;\mathbf 0,\mathbf R(\mathcal G)\big)\,p_{\mathcal G}(\mathcal G)\,d\mathcal G. \label{eq:mixture}
\end{align}
A finite Gaussian-mixture \eqref{eq:gmm_csi} can be viewed as a tractable approximation to this integral mixture obtained by quantizing the geometry space into $K$ representative regimes (e.g., via clustering/EM on channel samples). This is particularly well aligned with geometry-based stochastic channel models and measured datasets such as COST2100 and 3GPP-type scenarios, where CSI statistics change across visibility regions and cluster configurations \cite{COST2100,3GPP38901}. In addition, Gaussian-mixtures are universal density models: with sufficiently large $K$, they can approximate a broad class of continuous distributions, while retaining conditional Gaussian structure that enables analytic RD analysis and low-complexity TC.


\subsection{Universality and Entropy of Gaussian-Mixtures}
\label{subsec:gmm_prelim}

Although the CSI vectors in this paper are complex valued, it is convenient to state several statistical facts in an equivalent real-valued form. A Gaussian-mixture model with $K$ components has density
\begin{equation}
\label{eq:gmm_model_method}
p_K(\mathbf x)
\triangleq
\sum_{k=1}^{K}\pi_k\,
\mathcal{N}(\mathbf x;\boldsymbol{\mu}_k,\boldsymbol{\Sigma}_k),
\qquad \mathbf x\in\Rr^{N},
\end{equation}
where $\pi_k\ge 0$, $\sum_{k=1}^{K}\pi_k=1$, $\boldsymbol{\mu}_k\in\Rr^{N}$ is the mean vector, and $\boldsymbol{\Sigma}_k \succeq \mathbf 0$ is the covariance matrix of the $k$th component.

We next recall two fundamental properties that motivate Gaussian-mixtures as a CSI source model.

\begin{lemma}[Universality of Gaussian-mixtures]
\label{lemma1}
Let $p(\mathbf x)$ be a continuous and integrable probability density on $\Rr^{N}$. For any $\varepsilon>0$, there exist an integer $K$, weights $\{\pi_k\}_{k=1}^{K}$ with $\pi_k\ge 0$ and $\sum_{k=1}^{K}\pi_k=1$, and Gaussian parameters $\{(\boldsymbol{\mu}_k,\boldsymbol{\Sigma}_k)\}_{k=1}^{K}$ such that the Gaussian-mixture in \eqref{eq:gmm_model_method} satisfies
\begin{equation}
\|p(\mathbf x)-p_K(\mathbf x)\|_1<\varepsilon.
\end{equation}
\end{lemma}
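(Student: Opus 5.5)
The plan is a standard two-stage approximation: first smooth $p$ by convolving with a narrow isotropic Gaussian, then discretize the convolution integral into a finite convex combination of Gaussians. Let $\phi_\sigma(\mathbf x)=\mathcal N(\mathbf x;\mathbf 0,\sigma^2\mathbf I)$. Since $p\in L^1(\Rr^N)$ and $\{\phi_\sigma\}_{\sigma>0}$ is an approximate identity, $\|p*\phi_\sigma-p\|_1\to 0$ as $\sigma\to0$. This follows from continuity of translation in $L^1$ together with Minkowski's integral inequality:
\begin{equation}
\|p*\phi_\sigma-p\|_1\le\int\phi_\sigma(\mathbf y)\,\|p(\cdot-\mathbf y)-p\|_1\,d\mathbf y .
\end{equation}
Continuity of $p$ is not even required for this step. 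We fix $\sigma$ so that this error is below $\varepsilon/3$.

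Next we truncate. Choose a compact cube $Q$ with $\int_{Q^c}p<\varepsilon/3$ and set $\tilde p=p\mathbf 1_Q/\int_Q p$. Then $\|p-\tilde p\|_1\le 2\int_{Q^c}p$, so by shrinking the tail mass suitably we control $\|p-\tilde p\|_1$. Convolution with $\phi_\sigma$ is an $L^1$ contraction, so $\|p*\phi_\sigma-\tilde p*\phi_\sigma\|_1\le\|p-\tilde p\|_1$. The problem thus reduces to approximating
\begin{equation}
\tilde p*\phi_\sigma(\mathbf x)=\int_Q\tilde p(\mathbf y)\,\phi_\sigma(\mathbf x-\mathbf y)\,d\mathbf y .
\end{equation}

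For the discretization, partition $Q$ into small cubes $Q_1,\dots,Q_K$ of side $\delta$ with centers $\boldsymbol\mu_k$. Set $\pi_k=\int_{Q_k}\tilde p$, which gives nonnegative weights summing to one, and take $\boldsymbol\Sigma_k=\sigma^2\mathbf I$. Define $p_K=\sum_k\pi_k\,\phi_\sigma(\cdot-\boldsymbol\mu_k)$. Then
\begin{equation}
\|\tilde p*\phi_\sigma-p_K\|_1\le\sum_k\int_{Q_k}\tilde p(\mathbf y)\,\|\phi_\sigma(\cdot-\mathbf y)-\phi_\sigma(\cdot-\boldsymbol\mu_k)\|_1\,d\mathbf y\le\sup_{\|\mathbf z\|\le\sqrt N\delta}\|\phi_\sigma(\cdot-\mathbf z)-\phi_\sigma\|_1 .
\end{equation}
By continuity of translation for the fixed Gaussian $\phi_\sigma$ (or the explicit total-variation bound $\|\phi_\sigma(\cdot-\mathbf z)-\phi_\sigma\|_1\le\|\mathbf z\|/\sigma$ via Pinsker's inequality), this is below $\varepsilon/3$ for $\delta$ small enough.

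Combining the three errors with the triangle inequality, and allotting the budget as, for example, $\varepsilon/3$ for each stage after tightening the truncation, gives $\|p-p_K\|_1<\varepsilon$. The main obstacle is ordering the choices correctly. $\sigma$ must be fixed first, because the discretization error bound $\sqrt N\delta/\sigma$ degrades as $\sigma\to0$. Hence $\delta$, and with it $K$, has to be chosen after $\sigma$. Once this order is respected, every step is a routine $L^1$ estimate.
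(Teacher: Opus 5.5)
Your proof is correct and follows the same route as the paper's sketch: Gaussian smoothing $p*\varphi_\sigma$ via the approximate-identity property, followed by truncation to a bounded region and discretization of the mixing density into finitely many small cells with weights $\pi_k=\int_{C_k}p$ and common covariance $\sigma^2\mathbf I$, where the error is controlled by continuity of translation for the fixed Gaussian kernel. The differences are cosmetic. You renormalize the truncated density instead of adding an extra tail component, and you make the translation bound explicit via Pinsker. Your remark that $\sigma$ must be fixed before $\delta$ makes precise an ordering the paper leaves implicit.
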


\begin{proof} The formal proof is provided in \cite{GMproof, Bishop2006PRML}. For the sake of completeness, we will provide a sketch of the proof here. 

Let $d=N_r$ and let $\varphi_\sigma(\mathbf{x})\triangleq\mathcal{N}(\mathbf{x};\mathbf{0},\sigma^2\mathbf{I}_d)$.
Define the Gaussian-smoothed density
\[
p_\sigma(\mathbf{x}) \triangleq (p*\varphi_\sigma)(\mathbf{x})
= \int_{\mathbb{R}^d} p(\mathbf{y})\,\mathcal{N}(\mathbf{x};\mathbf{y},\sigma^2\mathbf{I}_d)\,d\mathbf{y}.
\]
Since $\{\varphi_\sigma\}$ is an approximate identity as $\sigma\rightarrow 0$and $p\in L^1(\mathbb{R}^d)$ is continuous,
we have $\|p-p_\sigma\|_1\to 0$ as $\sigma\rightarrow 0$; hence choose $\sigma$ such that
$\|p-p_\sigma\|_1<\varepsilon/2$.

Next, $p_\sigma$ is an \emph{infinite} Gaussian mixture with mixing density $p(\mathbf{y})$.
Choose $R>0$ so that $\int_{\|\mathbf{y}\|>R}p(\mathbf{y})\,d\mathbf{y}$ is sufficiently small, and partition the ball
$\{\|\mathbf{y}\|\le R\}$ into finitely many cells $\{C_k\}_{k=1}^K$ of small diameter.
Pick representatives $\boldsymbol{\mu}_k\in C_k$ and weights
$\pi_k\triangleq\int_{C_k}p(\mathbf{y})\,d\mathbf{y}$.
Define the finite Gaussian mixture as in \eqref{eq:gmm_model_method}
and (if needed) add one extra component to absorb the tail mass so that $\sum_{k=1}^K\pi_k=1$.

By continuity of translations in $L^1$ for the Gaussian kernel, making the cell diameters small ensures that replacing
$\mathcal{N}(\mathbf{x};\mathbf{y},\sigma^2\mathbf{I}_d)$ by
$\mathcal{N}(\mathbf{x};\boldsymbol{\mu}_k,\sigma^2\mathbf{I}_d)$ for $\mathbf{y}\in C_k$ yields
$\|p_\sigma-p_K\|_1<\varepsilon/2$ (and the tail contributes arbitrarily little by the choice of $R$).
Finally, the triangle inequality gives
\[
\|p-p_K\|_1 \le \|p-p_\sigma\|_1+\|p_\sigma-p_K\|_1 < \varepsilon,
\]which completes the proof.
\end{proof}

\noindent{\bf Remark 3 (Why universality matters for real CSI datasets):}
Lemma~\ref{lemma1} provides a formal justification for using \eqref{eq:gmm_model_method} to fit practical CSI distributions. Even if the true dataset distribution is induced by a complicated geometry process (cf. \eqref{eq:mixture}) and is not exactly a finite mixture, a sufficiently rich Gaussian-mixture can approximate it arbitrarily well in total variation. Crucially, the mixture retains \emph{conditional Gaussian structure} (one covariance per state), which enables tractable RD analysis and low-complexity TC once the active state is known or inferred.

\begin{lemma}[Differential entropy of Gaussian-mixtures]
\label{lemma2}
Let $C\in\{1,\dots,K\}$ be a discrete random variable with $\Pr(C=k)=\pi_k$, and let
$ 
\mathbf x\mid\{C=k\}\sim \mathcal N(\boldsymbol{\mu}_k,\boldsymbol{\Sigma}_k),
$ for $k\in [K]$.
Then
\begin{equation}
\label{eq:gmm_entropy_identity}
h(\mathbf x)=h(\mathbf x\mid C)+I(\mathbf x;C),
\end{equation}
where $h(\mathbf x)$ is the differential entropy under $p_K(\mathbf x)$ in \eqref{eq:gmm_model_method} and $I(\mathbf x;C)$ is the mutual information between $\mathbf x$ and $C$. Consequently,
\begin{equation}
\label{eq:gmm_entropy_bounds}
h(\mathbf x\mid C)\le h(\mathbf x)\le h(\mathbf x\mid C)+H(C),
\end{equation}
where $h(\mathbf x\mid C)=
\frac{1}{2}\sum_{k=1}^{K}\pi_k
\log_2\!\big((2\pi e)^{N}\det(\boldsymbol{\Sigma}_k)\big)$ and $H(C)=
-\sum_{k=1}^{K}\pi_k\log_2 \pi_k $
\end{lemma}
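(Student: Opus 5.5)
The plan is to derive \eqref{eq:gmm_entropy_identity} from the definition of mutual information, then obtain each side of \eqref{eq:gmm_entropy_bounds} from a one-line inequality.

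\emph{Identity.} Consider the joint law of $(\mathbf x,C)$: a discrete marginal $\pi_k$ and conditional densities $\mathcal N(\mathbf x;\boldsymbol\mu_k,\boldsymbol\Sigma_k)$. Summing over $k$ shows that the marginal density of $\mathbf x$ is exactly $p_K$ in \eqref{eq:gmm_model_method}. Define
\[
I(\mathbf x;C)=\sum_k \pi_k\int \mathcal N(\mathbf x;\boldsymbol\mu_k,\boldsymbol\Sigma_k)\log_2\frac{\mathcal N(\mathbf x;\boldsymbol\mu_k,\boldsymbol\Sigma_k)}{p_K(\mathbf x)}\,d\mathbf x .
\]
Splitting the logarithm into $\log_2 \mathcal N(\mathbf x;\boldsymbol\mu_k,\boldsymbol\Sigma_k)$ and $-\log_2 p_K(\mathbf x)$ gives $-h(\mathbf x\mid C)$ and $h(\mathbf x)$, respectively. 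This yields $I(\mathbf x;C)=h(\mathbf x)-h(\mathbf x\mid C)$, which is the identity.

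The main technical point is justifying that these integrals are finite so the split is legitimate. Here we assume every $\boldsymbol\Sigma_k\succ\mathbf 0$; otherwise $h(\mathbf x\mid C)=-\infty$ and the statement degenerates. Under this assumption:
\begin{itemize}[leftmargin=*]
\item Each Gaussian term has finite entropy.
\item $p_K$ is bounded above by $\max_k (2\pi)^{-N/2}\det(\boldsymbol\Sigma_k)^{-1/2}$.
\item $p_K$ is bounded below by $\pi_k\mathcal N(\cdot;\boldsymbol\mu_k,\boldsymbol\Sigma_k)$ for any $k$ with $\pi_k>0$.
\end{itemize}
Consequently $-\log_2 p_K(\mathbf x)$ is bounded below by a constant and above by a quadratic in $\mathbf x$. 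It is therefore integrable against each Gaussian, so $h(\mathbf x)$ is finite.

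\emph{Conditional entropy formula.} Here $h(\mathbf x\mid C)=\sum_k\pi_k h(\mathbf x\mid C=k)$. We then use the standard Gaussian entropy $\tfrac12\log_2((2\pi e)^N\det\boldsymbol\Sigma_k)$, obtained by taking the expectation of the quadratic form, $\E[(\mathbf x-\boldsymbol\mu_k)^{\mathsf T}\boldsymbol\Sigma_k^{-1}(\mathbf x-\boldsymbol\mu_k)]=N$.

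\emph{Lower bound.} $I(\mathbf x;C)$ is a nonnegative combination of the divergences $D\big(\mathcal N(\boldsymbol\mu_k,\boldsymbol\Sigma_k)\,\|\,p_K\big)$, each of which is nonnegative by Gibbs' inequality. Hence $I(\mathbf x;C)\ge 0$ and $h(\mathbf x)\ge h(\mathbf x\mid C)$.

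\emph{Upper bound.} Use the symmetric form $I(\mathbf x;C)=H(C)-H(C\mid\mathbf x)$, where $H(C\mid\mathbf x)=\E\big[-\sum_k P(k\mid\mathbf x)\log_2 P(k\mid\mathbf x)\big]$ and the posteriors are $P(k\mid\mathbf x)=\pi_k\mathcal N(\mathbf x;\boldsymbol\mu_k,\boldsymbol\Sigma_k)/p_K(\mathbf x)$. To check this form, substitute $\log_2(\mathcal N_k/p_K)=\log_2 P(k\mid\mathbf x)-\log_2\pi_k$ into the defining integral. Since the posteriors lie in $[0,1]$, $H(C\mid\mathbf x)\ge 0$. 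Therefore $I(\mathbf x;C)\le H(C)$, which gives the right-hand inequality in \eqref{eq:gmm_entropy_bounds}.
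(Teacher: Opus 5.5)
Your proof is correct and follows essentially the same route as the paper. Both rest on $I(\mathbf x;C)=h(\mathbf x)-h(\mathbf x\mid C)=H(C)-H(C\mid\mathbf x)$ and then $0\le I(\mathbf x;C)\le H(C)$; the paper gets the identity by expanding the mixed joint entropy $h(\mathbf x,C)$ two ways, while you compute from the divergence form of $I(\mathbf x;C)$ and also check that all terms are finite under $\boldsymbol\Sigma_k\succ\mathbf 0$, which the paper's sketch leaves implicit.
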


\begin{proof}
Although the detailed proof is provided in \cite{lee2026}, we present a sketch of the proof here for the completeness of the paper.
 
Using $p(\mathbf{x},k)=\pi_k\,p(\mathbf{x}\mid k)$, the chain rule for the mixed discrete--continuous pair $(\mathbf{x},C)$ gives
$ 
h(\mathbf{x},C)=H(C)+h(\mathbf{x}\mid C)$ and $
h(\mathbf{x},C)=h(\mathbf{x})+H(C\mid \mathbf{x}).$ Equating the two expressions yields
\begin{align}
   h(\mathbf{x})&=h(\mathbf{x}\mid C)+H(C)-H(C\mid \mathbf{x})\nonumber\\
&= h(\mathbf{x}\mid C)+I(\mathbf{x};C), 
\end{align}
which proves \eqref{eq:gmm_entropy_identity}. Since $0\le I(\mathbf{x};C)\le H(C)$, we obtain
\[
h(\mathbf{x}\mid C)\le h(\mathbf{x})\le h(\mathbf{x}\mid C)+H(C),
\]
which completes the proof. 
\end{proof}
 
\noindent{\bf Remark 4 (Entropy decomposition for Gaussian-mixtures):}
As proven in \cite{lee2026}, Lemma~\ref{lemma2} shows that the descriptive complexity of a mixture source separates into two parts: (i) the average complexity of the \emph{local} Gaussian components, $h(\mathbf x\mid C)$, and (ii) the discrete uncertainty of the latent state, captured (up to at most $H(C)$ bits) by the label $C$. This decomposition provides a high-level idea of the structure exploited by GMTC: CSI feedback can be organized as a \emph{lossless} label stream (state signaling) together with \emph{lossy} component-matched TC of the Gaussian coefficients within the selected state.

\section{CSI Compression Problem}
\label{sec:rd_formulation}

We now cast CSI compression as a \emph{lossy source coding} problem. The goal is to represent the downlink CSI at the BS using a limited number of feedback bits while controlling the reconstruction distortion. This formulation provides the information-theoretic limits of CSI compression from a RD perspective.

\subsection{Source Model and Feedback Code}
\label{subsec:encoder_decoder}

At each CSI feedback interval $t$, the UE observes (or estimates) the stacked wideband CSI vector
\begin{equation}
\mathbf h_t \in \C^{N}, 
\end{equation}
where $\mathbf h_t=\mathrm{vec}(\mathbf H_t)$ is obtained by vectorizing the $N=N_c\times N_t$ frequency-domain channel matrix. We model $\{\mathbf h_t\}_{t\ge 1}$ as samples from an underlying source distribution $p(\mathbf h)$ (specified in \cref{sec:csi_channel_model}). For the information-theoretic characterization, we adopt the standard memoryless model, which draws independent and identically distributed (i.i.d.) $n$-blocklength CSI samples from $ p(\mathbf h)$, i.e., $\mathbf h_1,\ldots,\mathbf h_n$, and denote $\mathbf h^n=(\mathbf h_1,\ldots,\mathbf h_n)$.

A \emph{rate-$R$ CSI feedback code} of blocklength $n$ consists of:
\begin{itemize}[leftmargin=*, itemsep=2pt]
\item an encoder
\begin{equation}
f_n:\;(\C^{N})^n \rightarrow \{1,\dots,2^{nNR}\},
\end{equation}
\item a decoder
\begin{equation}
g_n:\;\{1,\dots,2^{nNR}\}\rightarrow (\C^{N})^n.
\end{equation}
\end{itemize}
The encoder maps the CSI sequence $\mathbf h^n$ to a feedback index
\begin{equation}
s_n=f_n(\mathbf h^n),
\end{equation}
and the decoder reconstructs
\begin{equation}
\hat{\mathbf h}^n
=
g_n(s_n)
=
(\hat{\mathbf h}_1,\ldots,\hat{\mathbf h}_n).
\end{equation}
The rate $R$ is measured in \emph{bits per complex channel dimension}, so the total number of feedback bits per CSI vector is $NR$.

\begin{figure*}[t]
    \centering
    \includegraphics[width=1\textwidth]{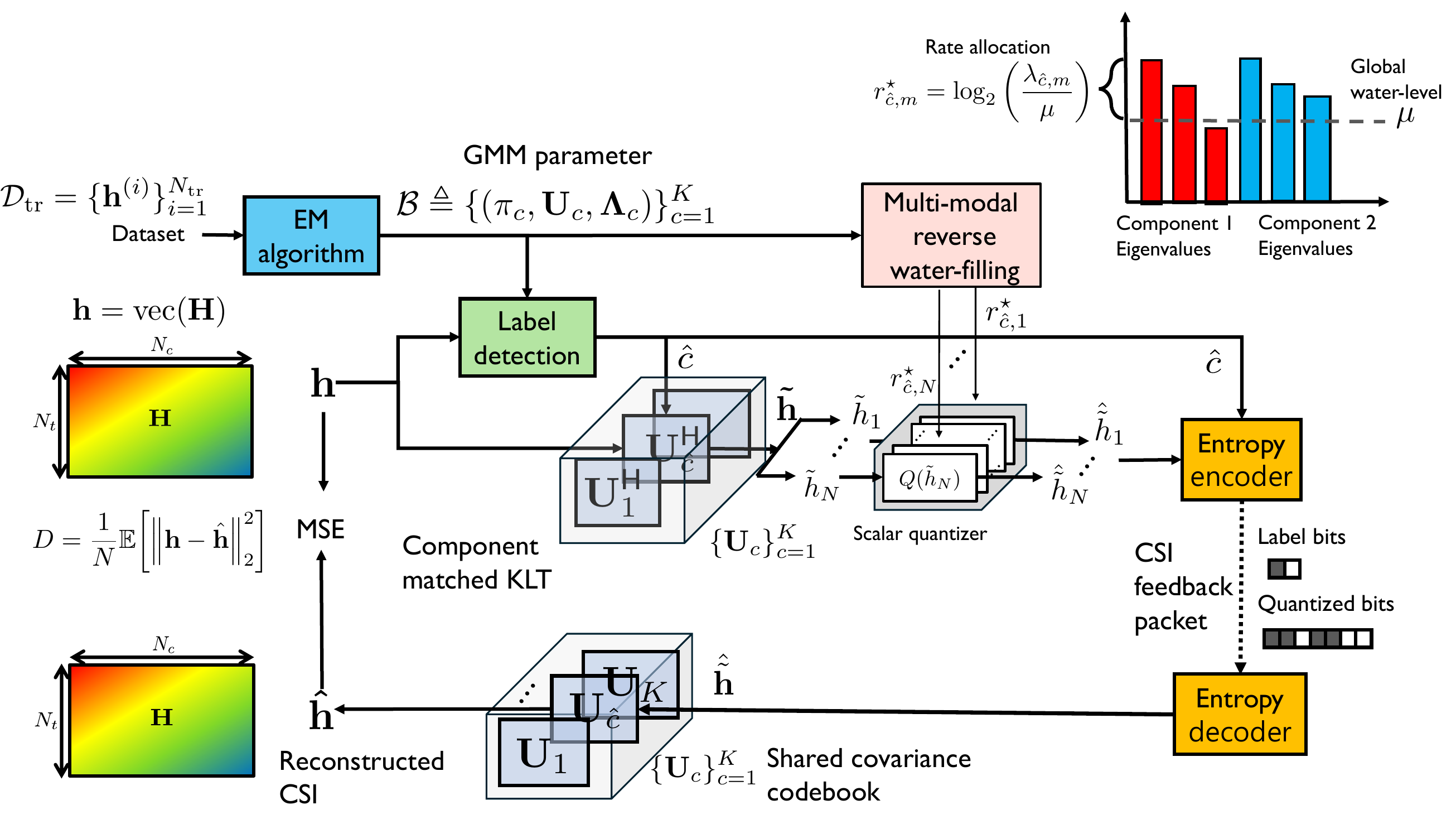}
    \caption{The proposed GMTC codec architecture for CSI compression, where $N_{\mathrm{tre}}$ denotes the size of the training set.}
    \label{Fig:1}
 \end{figure*}

\subsection{Distortion Measure}
\label{subsec:distortion_measure}

We measure CSI reconstruction quality by the mean-squared error (MSE) per complex channel dimension:
\begin{equation}
\label{eq:csi_block_distortion}
D
\triangleq
\mathbb{E}\!\left[
\frac{1}{nN}
\sum_{t=1}^{n}
\left\|
\mathbf h_t-\hat{\mathbf h}_t
\right\|_2^2
\right].
\end{equation}
Under the memoryless model, the distortion constraint is single-letter:
\begin{equation}
\label{eq:csi_single_letter_distortion}
D
=
\frac{1}{N}\mathbb{E}\!\left[\left\|\mathbf h-\hat{\mathbf h}\right\|_2^2\right].
\end{equation}

In simulations, performance is often reported using normalized MSE (NMSE), denoted as $\mathrm{NMSE}
=
\frac{\mathbb{E}\!\left[\|\mathbf h-\hat{\mathbf h}\|_2^2\right]}
{\mathbb{E}\!\left[\|\mathbf h\|_2^2\right]}.$ Since $\mathbb{E}\!\left[\|\mathbf h\|_2^2\right]/N$ is a fixed source-power normalization, minimizing MSE or NMSE leads to the same optimal coding rule for a given source model.

\subsection{Achievability and RD Function}
\label{subsec:csi_rd_function}

A pair $(R,D)$ is \emph{achievable} if there exists a sequence of $(n,R)$ CSI feedback codes such that the expected distortion in \eqref{eq:csi_block_distortion} is at most $D$ (as $n\to\infty$). The \emph{RD function} $R^\star(D)$ specifies the minimum rate required to achieve distortion $D$.

For a memoryless source with squared-error distortion, Shannon's RD theorem gives the single-letter characterization \cite{cover2006elements,BergerRD}:
\begin{equation}
\label{eq:csi_rd_function}
R^\star(D)
=
\frac{1}{N}
\inf_{p(\hat{\mathbf h}|\mathbf h):
\ \frac{1}{N}\mathbb{E}\|\mathbf h-\hat{\mathbf h}\|_2^2\le D}
I(\mathbf h;\hat{\mathbf h}),
\end{equation}
where $I(\mathbf h;\hat{\mathbf h})$ is mutual information and $R^\star(D)$ is measured in bits per complex channel dimension. Equivalently, one may define the distortion--rate function
\begin{equation}
D^\star(R)
=
\inf_{\substack{p(\hat{\mathbf h}|\mathbf h):\\
\frac{1}{N}I(\mathbf h;\hat{\mathbf h})\le R}}
\frac{1}{N}\mathbb{E}\!\left[\|\mathbf h-\hat{\mathbf h}\|_2^2\right].
\end{equation}

Although \eqref{eq:csi_rd_function} is conceptually clean, it is rarely tractable for high-dimensional, non-Gaussian CSI distributions. Closed-form solutions are known for special classes such as correlated Gaussian sources (via KLT and reverse-waterfilling), but realistic CSI datasets are multi-modal and nonstationary. This motivates the Gaussian-mixture source model in \cref{sec:csi_channel_model}, the GMTC architecture developed next, and the RD bounds analysis provided in subsequent sections.

\section{Gaussian-Mixture Transform Coding}
\label{sec:gmtc}

We now put forth GMTC, a CSI feedback architecture tailored to the Gaussian-mixture model in \cref{sec:csi_channel_model} and the source coding formulation in \cref{sec:rd_formulation}. 
GMTC is a lossless--lossy hybrid code. The overall encoding and decoding architecture is illustrated in Fig. \ref{Fig:1}. Specifically, each CSI realization is assigned to a geometry state; the UE sends the state label using lossless coding and then compresses the CSI using TC matched to the corresponding covariance. The required transforms and eigenvalue spectra are learned offline and stored as a shared dictionary at the UE and BS, so the online payload consists only of the label and the coded transform coefficients. In \cref{sec:fundamental}, we show that label-aware GMTC attains the conditional Gaussian benchmark up to a normalized label-entropy term, and that the optimal allocation is governed by a single multi-modal reverse-waterfilling level.

\subsection{Offline Learning and Shared Transform Dictionary}
\label{subsec:offline_dictionary}

GMTC separates the problem into two time scales. Offline, the BS learns a small set of \emph{covariance states} that summarize the long-term multi-modality of the CSI dataset; online, the UE only needs to identify the active state and apply the corresponding transform.

Let $\mathcal D_{\rm tr}=\{\mathbf h^{(i)}\}_{i=1}^{N_{\rm tr}}$ be a training set of stacked CSI vectors. From $\mathcal D_{\rm tr}$, the BS (or a network controller) fits a $K$-component zero-mean proper complex Gaussian-mixture
\begin{equation}
\label{eq:gmm_offline_learning}
p(\mathbf h)
=
\sum_{c=1}^{K}\pi_c\,\CN(\mathbf h;\mathbf 0,\mathbf R_c),
\end{equation}
where $\pi_c=\Pr(C=c)$ is the prior probability of geometry state $c$ and $\mathbf R_c$ is its covariance. Estimating $\{\pi_c,\mathbf R_c\}_{c=1}^{K}$ is a standard maximum-likelihood task; a common choice is the expectation--maximization (EM) algorithm \cite{EM}. Given current parameters, in the $i$th iteration, the E-step computes the posterior responsibilities
\begin{equation}
r_{i,c}
\triangleq
\Pr(C=c\mid \mathbf h^{(i)})
=
\frac{\pi_c\,\CN(\mathbf h^{(i)};\mathbf 0,\mathbf R_c)}
{\sum_{j=1}^{K}\pi_j\,\CN(\mathbf h^{(i)};\mathbf 0,\mathbf R_j)},
\end{equation}
and the M-step updates
\begin{equation}
\pi_c \leftarrow \frac{1}{N_{\rm tr}}\sum_{i=1}^{N_{\rm tr}} r_{i,c},
\qquad
\mathbf R_c \leftarrow
\frac{\sum_{i=1}^{N_{\rm tr}} r_{i,c}\,\mathbf h^{(i)}(\mathbf h^{(i)})^{\mathsf H}}
{\sum_{i=1}^{N_{\rm tr}} r_{i,c}}.
\end{equation}
Each iteration increases the data likelihood and converges to a stationary point; in practice, $k$-means initialization is often used to avoid poor local optima.

Once the covariances are learned, the BS computes for each state the KLT basis via eigendecomposition
\begin{equation}
\label{eq:gmtc_component_eig}
\mathbf R_c
=
\mathbf U_c \boldsymbol{\Lambda}_c \mathbf U_c^{\mathsf H},
\qquad
\boldsymbol{\Lambda}_c
=
\diag(\lambda_{c,1},\ldots,\lambda_{c,N}),
\end{equation}
with $\lambda_{c,1}\ge \cdots \ge \lambda_{c,N}\ge 0$. The resulting shared transform dictionary is
\begin{equation}
\label{eq:gmtc_dictionary}
\mathcal B
\triangleq
\{(\pi_c,\mathbf U_c,\boldsymbol{\Lambda}_c)\}_{c=1}^{K}.
\end{equation}
Because $\mathcal B$ captures long-term second-order statistics, it can be distributed and stored on a slow timescale (e.g., per deployment region or via periodic updates). Online CSI feedback therefore needs only a short state label and the quantized transform coefficients; no per-block covariance or basis signaling is required.


\subsection{Online GMTC Encoding and Decoding}
\label{subsec:online_gmtc}

Consider one CSI realization $\mathbf h\in\C^N$ to be compressed. Let $R_{\rm tot}$ denote the total feedback budget in \emph{bits per complex channel dimension}. GMTC produces a feedback bitstream
\begin{equation}
s=(s_{\rm lbl},s_{\rm q}),
\end{equation}
where $s_{\rm lbl}$ encodes the state label (losslessly) and $s_{\rm q}$ encodes the quantized transform coefficients (lossy quantization + lossless entropy coding).

\paragraph*{1) State Inference at UE}
Using the learned mixture \eqref{eq:gmm_offline_learning}, the UE computes the posterior score $p(C=c\mid \mathbf h)$ 
and selects the MAP label
\begin{equation}
\label{eq:map_state_selection}
\hat c
=
\arg\max_{c\in\{1,\ldots,K\}}
p(C=c\mid \mathbf h).
\end{equation}

\paragraph*{2) Lossless Label Signaling}
The label $\hat c$ is entropy-coded using a lossless source code matched to $\{\pi_c\}$. If the geometry state remains constant for $\tau$ consecutive CSI feedback blocks, then the normalized label rate is
\begin{equation}
\label{eq:label_rate_gmtc}
R_{\rm lbl}
=
\frac{H(C)}{\tau N}
\quad \text{bits per complex channel dimension},
\end{equation}
where $H(C)=-\sum_{c=1}^{K}\pi_c\log_2\pi_c$.

\paragraph*{3) Component-Matched Transform at UE}
Conditioned on $\hat c$, the UE applies the component-matched KLT
\begin{equation}
\label{eq:gmtc_klt}
\tilde{\mathbf h}
=
\mathbf U_{\hat c}^{\mathsf H}\mathbf h.
\end{equation}
Under the model, the transform coefficients are approximately independent with variances
\begin{equation}
\tilde h_m \sim \CN(0,\lambda_{\hat c,m}),\qquad m=1,\ldots,N.
\end{equation}

\paragraph*{4) Lossy Quantization with RD-Motivated Allocation}
The coefficient-rate budget is
\begin{equation}
\label{eq:gmtc_rate_split}
R_{\rm q}=R_{\rm tot}-R_{\rm lbl}.
\end{equation}
As shown in \cref{sec:fundamental}, the rate/distortion allocation that is optimal for the conditional Gaussian benchmark is governed by a \emph{single global} reverse-waterfilling level $\mu$ (see Theorem~\ref{thm:global_waterlevel_gmm}). Accordingly, the ideal per-mode allocations are
\begin{align}
\label{eq:gmtc_per_mode_distortion}
d_{\hat c,m}^{\star}
&=
\min\{\lambda_{\hat c,m},\mu\},
\\
\label{eq:gmtc_per_mode_rate}
r_{\hat c,m}^{\star}
&=
\left[\log_2\!\left(\frac{\lambda_{\hat c,m}}{\mu}\right)\right]_+.
\end{align}
In practice, GMTC implements \eqref{eq:gmtc_per_mode_rate} by mapping $r_{\hat c,m}^{\star}$ to a finite set of quantizer resolutions and applying scalar quantization to each $\tilde h_m$, producing discrete indices $\{i_m\}_{m=1}^{N}$.

\paragraph*{5) Lossless Entropy Coding of Quantized Indices}
The index sequence $\{i_m\}$ is then entropy-coded to form $s_{\rm q}$. This stage reduces the \emph{realized} feedback rate without introducing additional distortion.

\paragraph*{6) CSI Feedback and BS Reconstruction}
We assume the uplink supports reliable delivery of the feedback bitstream (e.g., via suitable channel coding and adaptive modulation). Upon receiving $s=(s_{\rm lbl},s_{\rm q})$, the BS decodes $\hat c$, retrieves $\mathbf U_{\hat c}$ from $\mathcal B$, reconstructs $\hat{\tilde{\mathbf h}}$, and applies the inverse KLT
\begin{equation}
\label{eq:gmtc_inverse_klt}
\hat{\mathbf h}
=
\mathbf U_{\hat c}\hat{\tilde{\mathbf h}}.
\end{equation}


\subsection{Interpretation and Complexity}
\label{subsec:gmtc_discussion}

GMTC is a hybrid of state selection and TC where the only nonlinear step is the selection of a covariance state; conditioned on the selected state, encoding and decoding reduce to classical KLT TC with coefficient-wise quantization. Thus the UE performs one $N$-dimensional transform and $N$ scalar quantizations per CSI block, i.e., the same order of online complexity as single covariance TC (and $O(N\log N)$ when a fast transform is available). The additional cost is \emph{offline} storage of the dictionary, which scales as $O(KN^2)$ for dense bases and can be reduced using structured covariance models. Because each conditional component is Gaussian, the RD behavior of GMTC is analytically predictable: \cref{sec:fundamental} shows that the optimal allocation is governed by a single multi-modal reverse-waterfilling level (Theorem~\ref{thm:global_waterlevel_gmm}) and that label-aware GMTC achieves the conditional benchmark up to a normalized label-entropy term (Theorem~\ref{thm:oracle_achievable}).

\section{Rate-Distortion Bounds}
\label{sec:fundamental}

Having described GMTC in \cref{sec:gmtc}, we now provide an information-theoretic RD characterization for Gaussian-mixture CSI. The exact RD function of a mixture source is generally intractable because the decoder does not observe the latent geometry state. Guided by the entropy decomposition in Lemma~\ref{lemma2}, we derive a sharp sandwich bound: a genie-aided converse lower bound obtained when the state is revealed for free, and a label-aware achievable upper bound obtained by explicitly transmitting the state label. Both bounds are governed by a \emph{single reverse-waterfilling level} shared across all eigenmodes of all mixture components.

 \subsection{RD Function for Single-State Gaussian}
\label{subsec:single_gaussian_rd}

We begin with the classical benchmark in which the stacked CSI vector $\mathbf h\in\C^N$ follows a single proper complex Gaussian law, i.e., 
$\mathbf h \sim \CN(\mathbf 0,\mathbf R).$
Let the eigendecomposition be
\begin{equation}
\mathbf R=\mathbf U\boldsymbol{\Lambda}\mathbf U^{\mathsf H},
\qquad
\boldsymbol{\Lambda}=\diag(\lambda_1,\ldots,\lambda_N),
\end{equation}
with $\lambda_1\ge \cdots \ge \lambda_N\ge 0$.

\begin{proposition}[RD function of correlated Gaussian CSI]
\label{prop:gaussian_csi_rd}
Under the distortion measure $D=\frac{1}{N}\E[\|\mathbf h-\hat{\mathbf h}\|_2^2]$, the RD function is
\begin{align}
\label{eq:single_gaussian_rate}
R_{\rm G}(D)
&=
\frac{1}{N}\sum_{m=1}^{N}
\left[\log_2\!\left(\frac{\lambda_m}{\mu}\right)\right]_+,
\\
\label{eq:single_gaussian_dist}
D
&=
\frac{1}{N}\sum_{m=1}^{N}\min\{\lambda_m,\mu\},
\end{align}
for some water level $\mu\ge 0$. Equivalently, the optimal per-mode distortion allocation is
\begin{equation}
d_m^\star=\min\{\lambda_m,\mu\}, \qquad m=1,\ldots,N.
\end{equation}
\end{proposition}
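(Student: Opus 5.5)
The plan is to reduce to $N$ independent scalar proper complex Gaussian sources through the KLT, bound the rate below by a sum of per-mode rate--distortion functions, and then solve the resulting convex allocation problem. The achievability part will use a per-mode Gaussian test channel.

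\emph{Reduction.} Set $\tilde{\mathbf h}=\mathbf U^{\mathsf H}\mathbf h$ and $\hat{\tilde{\mathbf h}}=\mathbf U^{\mathsf H}\hat{\mathbf h}$. Since $\mathbf U$ is unitary, mutual information and squared error are both invariant:
\[
I(\mathbf h;\hat{\mathbf h})=I(\tilde{\mathbf h};\hat{\tilde{\mathbf h}}),\qquad \|\mathbf h-\hat{\mathbf h}\|_2^2=\|\tilde{\mathbf h}-\hat{\tilde{\mathbf h}}\|_2^2 .
\]
Under the Gaussian model the coefficients are independent with $\tilde h_m\sim\CN(0,\lambda_m)$. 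Thus \eqref{eq:csi_rd_function} becomes the rate--distortion problem for $N$ independent scalar proper complex Gaussians under a sum-MSE constraint $\sum_m \E|\tilde h_m-\hat{\tilde h}_m|^2\le ND$.

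\emph{Converse.} Independence of the source components together with the chain rule gives
\[
I(\tilde{\mathbf h};\hat{\tilde{\mathbf h}})\ge\sum_m I(\tilde h_m;\hat{\tilde h}_m).
\]
Writing $d_m=\E|\tilde h_m-\hat{\tilde h}_m|^2$, I will use the scalar complex Gaussian rate--distortion bound
\[
I(\tilde h_m;\hat{\tilde h}_m)\ge\bigl[\log_2(\lambda_m/d_m)\bigr]_+ .
\]
This follows from $h(\tilde h_m)=\log_2(\pi e\lambda_m)$ and the maximum-entropy bound $h(\tilde h_m-\hat{\tilde h}_m\mid\hat{\tilde h}_m)\le\log_2(\pi e d_m)$, which is where properness matters: a complex dimension carries a $\log_2$ factor rather than $\tfrac12\log_2$. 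Hence
\[
N R^\star(D)\ge\min\sum_m\bigl[\log_2(\lambda_m/d_m)\bigr]_+ \quad\text{s.t.}\quad \sum_m d_m\le ND,\; 0\le d_m .
\]
Modes with $\lambda_m=0$ can be dropped, with $d_m=0$.

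\emph{Optimization.} Without loss of generality $d_m\le\lambda_m$, since a larger $d_m$ wastes distortion budget and costs zero rate anyway. The objective $\sum_m\log_2(\lambda_m/d_m)$ is then convex in $\mathbf d$ over this box, and the constraint is linear. The KKT conditions give $1/d_m=\nu\ln 2$ whenever $d_m<\lambda_m$, i.e., $d_m=\mu$ constant, and $d_m=\lambda_m$ otherwise. So $d_m^\star=\min\{\lambda_m,\mu\}$, with $\mu$ chosen so that \eqref{eq:single_gaussian_dist} holds with equality; this is possible for $D$ between $0$ and $\frac1N\sum_m\lambda_m$ by continuity and monotonicity in $\mu$. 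Substituting into the objective yields \eqref{eq:single_gaussian_rate}.

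\emph{Achievability.} Choose $\hat{\tilde h}_m$ independent across modes as follows. For $\lambda_m>\mu$, use the backward test channel $\tilde h_m=\hat{\tilde h}_m+z_m$ with $\hat{\tilde h}_m\sim\CN(0,\lambda_m-\mu)$ and $z_m\sim\CN(0,\mu)$ independent. For $\lambda_m\le\mu$, set $\hat{\tilde h}_m=0$. Each mode then meets its per-mode distortion $\min\{\lambda_m,\mu\}$ and the total mutual information equals the converse value. Setting $\hat{\mathbf h}=\mathbf U\hat{\tilde{\mathbf h}}$ gives the same rate and distortion in the original coordinates, so the two bounds match.

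The main obstacle is the careful justification of the KKT step: handling the $[\cdot]_+$ nonsmoothness and zero eigenvalues. I will handle it by restricting to the box $0<d_m\le\lambda_m$ first and then arguing monotonicity in $\mu$ to obtain existence and uniqueness of the water level.
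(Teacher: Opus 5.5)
Your proposal is correct and follows the classical parallel-Gaussian reverse water-filling argument that the paper invokes only by citation to Cover--Thomas and Goyal: KLT decoupling, a per-mode maximum-entropy converse, KKT allocation, and a backward Gaussian test channel. You also correctly adapt it to proper complex Gaussians, which is why no factor $\tfrac12$ appears. One small precision: the bound $\log_2(\pi e d_m)$ on the error entropy holds for any complex error with second moment $d_m$, so properness is not used there; it is used for $h(\tilde h_m)=\log_2(\pi e\lambda_m)$ and for the independence of the uncorrelated KLT coefficients.
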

\begin{proof}
    See \cite{goyal2002theoretical,cover2006elements}.
\end{proof}

\subsection{Multi-Modal Reverse-Waterfilling for Gaussian-Mixture}
\label{subsec:conditional_benchmark}

We now return to the Gaussian-mixture CSI source
\begin{equation}
\label{eq:gmm_csi_rd}
p({\mathbf h})
=
\sum_{c=1}^{K}\pi_c\,\CN({\mathbf h};\mathbf 0,\mathbf R_c),
\end{equation}
where $\Pr(C=c)=\pi_c$, $\pi_c>0$, and $\sum_{c=1}^{K}\pi_c=1$. Let
\begin{equation}
\mathbf R_c
=
\mathbf U_c \boldsymbol{\Lambda}_c \mathbf U_c^{\mathsf H},
\end{equation}
with
$\boldsymbol{\Lambda}_c
=
\diag(\lambda_{c,1},\ldots,\lambda_{c,N}).$  Conditioned on $C=c$, $\mathbf h$ is a correlated Gaussian vector with covariance $\mathbf R_c$. From the classical RD theory \cite{cover2006elements}, its RD function admits the parametric form
\begin{align}
\label{eq:Rc_rate}
R_c(D_c)
&=
\frac{1}{N}\sum_{m=1}^{N}
\left[\log_2\!\left(\frac{\lambda_{c,m}}{\mu_c}\right)\right]_+,
\\
\label{eq:Rc_dist}
D_c
&=
\frac{1}{N}\sum_{m=1}^{N}\min\{\lambda_{c,m},\mu_c\},
\end{align}
for some water level $\mu_c\ge 0$.

If both terminals know $C$, they may allocate a component-dependent distortion budget $\{D_c\}$ and code each component optimally. This fact induces the RD function:
\begin{equation}
\label{eq:Rcond_def}
R_{\rm cond}(D)
\triangleq
\min_{\{D_c\ge 0\}:\ \sum_{c=1}^{K}\pi_c D_c\le D}
\sum_{c=1}^{K}\pi_c R_c(D_c).
\end{equation}
Since each $R_c(D_c)$ is convex and nonincreasing in $D_c$, \eqref{eq:Rcond_def} is a convex program.

\begin{theorem}[Single global water level]
\label{thm:global_waterlevel_gmm}
The optimizer of \eqref{eq:Rcond_def} is characterized by a single global water level $\mu$ such that
\begin{align}
\label{eq:gmm_global_rate}
R_{\rm cond}(D)
&=
\frac{1}{N}\sum_{c=1}^{K}\pi_c\sum_{m=1}^{N}
\left[\log_2\!\left(\frac{\lambda_{c,m}}{\mu}\right)\right]_+,
\\
\label{eq:gmm_global_dist}
D
&=
\frac{1}{N}\sum_{c=1}^{K}\pi_c\sum_{m=1}^{N}
\min\{\lambda_{c,m},\mu\}.
\end{align}
\end{theorem}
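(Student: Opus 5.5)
The plan is to lift the per-component problem to a joint problem over all $KN$ eigenmodes. Then I will show via Lagrangian/KKT conditions that the multiplier of the single averaged distortion constraint acts as one common water level.

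\emph{Step 1 (reformulation).} Substituting the parametric forms \eqref{eq:Rc_rate}--\eqref{eq:Rc_dist} into \eqref{eq:Rcond_def}, I will rewrite $R_{\rm cond}(D)$ as a minimization over per-mode distortions $\{d_{c,m}\}$:
\begin{equation*}
\min_{\{0\le d_{c,m}\le \lambda_{c,m}\}}\ \frac{1}{N}\sum_{c=1}^{K}\pi_c\sum_{m=1}^{N}\log_2\!\frac{\lambda_{c,m}}{d_{c,m}}
\quad\text{s.t.}\quad \frac{1}{N}\sum_{c=1}^{K}\pi_c\sum_{m=1}^{N} d_{c,m}\le D .
\end{equation*}
Two inequalities give this equivalence.
\begin{itemize}
\item For fixed $\{D_c\}$, Proposition~\ref{prop:gaussian_csi_rd} applied to component $c$ says that $R_c(D_c)$ equals the minimum of $\frac1N\sum_m\log_2(\lambda_{c,m}/d_{c,m})$ over $d_{c,m}\in[0,\lambda_{c,m}]$ with $\frac1N\sum_m d_{c,m}\le D_c$. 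This is the standard reverse-waterfilling fact.
\item Conversely, any feasible $\{d_{c,m}\}$ induces $D_c=\frac1N\sum_m d_{c,m}$, which satisfies $\sum_c\pi_cD_c\le D$.
\end{itemize}
Hence the nested minimization equals the joint one.

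\emph{Step 2 (KKT).} The joint objective is a separable sum of convex functions $-\pi_c\log_2 d_{c,m}$ over a box, with one linear constraint. Slater's condition holds for $D>0$, so KKT conditions are necessary and sufficient. Introduce a multiplier $\nu\ge0$ for the distortion constraint. Stationarity for each interior coordinate gives $\pi_c/(d_{c,m}\ln 2)=\nu\pi_c$. The key cancellation is that the weight $\pi_c$ appears both in the rate term and in the distortion term, so it drops out and leaves $d_{c,m}=1/(\nu\ln2)\triangleq\mu$, independent of $c$ and $m$. Handling the box constraint $d_{c,m}\le\lambda_{c,m}$ in the usual way gives $d_{c,m}^\star=\min\{\lambda_{c,m},\mu\}$. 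Plugging this in yields \eqref{eq:gmm_global_rate}--\eqref{eq:gmm_global_dist}.

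\emph{Step 3 (existence of $\mu$ and edge cases).} The map $\mu\mapsto\frac1N\sum_c\pi_c\sum_m\min\{\lambda_{c,m},\mu\}$ is continuous and nondecreasing. It runs from $0$ up to $\frac1N\sum_c\pi_c\operatorname{tr}\mathbf R_c$, so for any $D$ in that range some $\mu$ meets the constraint with equality. Complementary slackness is consistent with this, because the objective is strictly decreasing in any $d_{c,m}<\lambda_{c,m}$. For $D$ at or above the total variance, the rate is zero and any $\mu\ge\max\lambda_{c,m}$ works. Zero eigenvalues force $d_{c,m}=0$ with zero rate and can be dropped.

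\emph{Main obstacle.} The delicate point is Step 1: $R_c$ is only given parametrically, so I have to justify interchanging the outer allocation over $D_c$ with the inner per-mode reverse-waterfilling. I would argue this purely by the two inequalities above, without invoking differentiability of $R_c$. As a cross-check, I would also derive the same conclusion directly on \eqref{eq:Rcond_def}. There the KKT condition reads $-R_c'(D_c)=\nu$ for every active $c$, and the slope of the Gaussian RD curve is $-R_c'(D_c)=1/(\mu_c\ln 2)$. This forces $\mu_c=\mu$ for all $c$ with positive rate. Components with $D_c$ equal to their full variance correspond to $\mu\ge\max_m\lambda_{c,m}$, which is again consistent with a single global $\mu$.
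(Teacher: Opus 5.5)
Your argument is correct, but your main route differs from the paper's. Your cross-check is essentially the paper's proof.

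\textbf{The paper's route.} It stays at the component level. It keeps $\{D_c\}$ as the variables of \eqref{eq:Rcond_def} and differentiates the parametric curve \eqref{eq:Rc_rate}--\eqref{eq:Rc_dist} in $\mu_c$ to get $dR_c/dD_c=-1/(\mu_c\ln 2)$. From the stationarity condition $\pi_c\,dR_c/dD_c+\nu\pi_c-\eta_c=0$ it then reads off $\mu_c=1/(\nu\ln 2)$ for every component with $D_c>0$. This is an equal-slope argument.

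\textbf{Your route.} You flatten the nested problem into a single reverse-waterfilling over all $KN$ modes weighted by $\pi_c$, justified by your exchange-of-minimizations argument. The objective is then smooth and separable, and the common level falls out of the cancellation of $\pi_c$ in the per-mode stationarity condition.

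\textbf{What your route buys.} You never differentiate the implicitly defined $R_c(D_c)$. The paper's informal subgradient remark is therefore replaced by ordinary box-constraint multipliers. This includes the genuine kink at $D_c=\frac1N\operatorname{tr}\mathbf R_c$, where $\mu_c$ is no longer unique and the component drops to zero rate. You obtain the per-mode allocation $d^\star_{c,m}=\min\{\lambda_{c,m},\mu\}$ of \eqref{eq:gmtc_per_mode_distortion} directly, which makes the ``pooled spectrum'' interpretation after the theorem literal. You also settle existence of $\mu$, KKT sufficiency, and the zero-rate and zero-eigenvalue cases, which the paper leaves implicit.

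\textbf{What the paper's route buys.} Your Step~1 relies on the variational (per-mode minimization) form of the Gaussian RD function. That form comes from the standard proof of Proposition~\ref{prop:gaussian_csi_rd}, not from its stated parametric form. The paper's equal-slope argument instead treats each $R_c$ as a black box. It exposes a principle that holds for any convex component RD curves, Gaussian or not: equalize RD slopes across components with positive rate.
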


\begin{proof}
We derive the first-order optimality condition, i.e., Karush–Kuhn–Tucker (KKT) condition.

Fix a component $c$. For any $D_c$ with $0\le D_c<\frac{1}{N}\sum_{m=1}^{N}\lambda_{c,m}$, the corresponding water level $\mu_c$ in \eqref{eq:Rc_dist} is uniquely determined because $D_c(\mu_c)=\frac{1}{N}\sum_m \min\{\lambda_{c,m},\mu_c\}$ is continuous, nondecreasing, and strictly increasing on intervals where the active set $\{m:\lambda_{c,m}>\mu_c\}$ is fixed.

Define the active set $\mathcal A_c(\mu_c)\triangleq\{m:\lambda_{c,m}>\mu_c\}$ and its cardinality $K_c(\mu_c)=|\mathcal A_c(\mu_c)|$. On any interval where $\mathcal A_c(\mu_c)$ is constant,
\begin{equation}
\begin{aligned}
R_c(\mu_c)
&=
\frac{1}{N}\sum_{m\in\mathcal A_c(\mu_c)}
\log_2\!\left(\frac{\lambda_{c,m}}{\mu_c}\right),\\
D_c(\mu_c)
&=
\frac{1}{N}\Big(\sum_{m\notin\mathcal A_c(\mu_c)}\lambda_{c,m}+K_c(\mu_c)\mu_c\Big).
\end{aligned}
\label{eq:Rc_Dc_mu}
\end{equation}
Differentiating yields
\begin{equation}
\frac{dR_c}{d\mu_c}
=
-\frac{K_c(\mu_c)}{N\mu_c\ln 2},
\qquad
\frac{dD_c}{d\mu_c}
=
\frac{K_c(\mu_c)}{N}.
\end{equation}
Hence, wherever differentiable,
\begin{equation}
\label{eq:dRc_dDc}
\frac{dR_c}{dD_c}
=
\frac{dR_c/d\mu_c}{dD_c/d\mu_c}
=
-\frac{1}{\mu_c\ln 2}.
\end{equation}
At points where the active set changes (i.e., $\mu_c$ crosses an eigenvalue), the derivative may not exist, but \eqref{eq:dRc_dDc} still characterizes the subgradient; this is sufficient for KKT optimality in a convex program.

Consider the Lagrangian
\begin{align}
  \mathcal L(\{D_c\},\nu,\{\eta_c\})
&=
\sum_{c=1}^{K}\pi_c R_c(D_c)
+
\nu\Big(\sum_{c=1}^{K}\pi_c D_c-D\Big)\nonumber\\
&-\sum_{c=1}^{K}\eta_c D_c,  
\end{align}
with multipliers $\nu\ge 0$ and $\eta_c\ge 0$. At an optimum, we have the KKT conditions:
(i) primal feasibility: $D_c\ge 0$ and $\sum_c \pi_c D_c\le D$;
(ii) dual feasibility: $\nu\ge 0$, $\eta_c\ge 0$;
(iii) complementarity: $\nu(\sum_c \pi_c D_c-D)=0$ and $\eta_c D_c=0$;
(iv) stationarity: for each $c$,
\begin{equation}
\label{eq:KKT_stationarity}
\pi_c \frac{dR_c}{dD_c} + \nu\pi_c - \eta_c = 0,
\end{equation}
where $\frac{dR_c}{dD_c}$ is interpreted as a subgradient when nondifferentiable.

For any \emph{active} component with $D_c>0$, complementarity forces $\eta_c=0$, so \eqref{eq:KKT_stationarity} reduces to
\begin{equation}
\frac{dR_c}{dD_c} + \nu = 0.
\end{equation}
Using \eqref{eq:dRc_dDc}, this implies
\begin{equation}
-\frac{1}{\mu_c\ln 2} + \nu = 0
\quad\Rightarrow\quad
\mu_c = \frac{1}{\nu\ln 2},
\end{equation}
which is independent of $c$. Thus all active components share a common water level $\mu$. Substituting $\mu_c=\mu$ into the componentwise formulas \eqref{eq:Rc_rate}--\eqref{eq:Rc_dist} and averaging over $c$ yields \eqref{eq:gmm_global_rate}--\eqref{eq:gmm_global_dist}, completing the proof.
\end{proof}

Theorem~\ref{thm:global_waterlevel_gmm} implies that, although the source is a \emph{mixture} of $K$ different Gaussian geometries, the optimal RD allocation does \emph{not} require $K$ separate water levels. There is only one knob: a single water level $\mu$ shared by \emph{every} eigenmode of \emph{every} component. Operationally, we can imagine collecting all eigenmodes $\{\lambda_{c,m}\}$ into one big pool, where each mode from component $c$ is weighted by how often that component occurs (its probability $\pi_c$). Reverse-waterfilling is then performed once on this pooled spectrum: keep the modes with $\lambda_{c,m}>\mu$ and \textit{fill} them to distortion $\mu$, while modes with $\lambda_{c,m}\le \mu$ are left untouched and contribute their full variance as distortion.

This has two important consequences. First, the mixture structure does not destroy the simplicity of Gaussian TC: the entire conditional benchmark $R_{\rm cond}(D)$ is still parameterized by a \emph{single scalar} $\mu$. Second, the mixture probabilities $\{\pi_c\}$ enter only through averaging in \eqref{eq:gmm_global_rate}--\eqref{eq:gmm_global_dist}, which means the bit budget is automatically steered toward the states that occur most often, without requiring the encoder to solve a separate allocation problem for each state. In particular, once $\mu$ is chosen to meet a target rate (or distortion), GMTC can implement the corresponding per-mode allocation in the selected state $\hat c$ by the same reverse-waterfilling rule as the single-Gaussian case, but with a water level that is globally consistent across all states.

\subsection{Genie-Aided Converse Lower Bound}
\label{subsec:genie_lb}
To obtain a converse benchmark, we consider a genie that reveals the geometry state $C$ to both the UE encoder and the BS decoder. With $C$ available as common side information, the CSI source becomes conditionally Gaussian, and the RD problem reduces to the conditional benchmark $R_{\rm cond}(D)$ in \cref{subsec:conditional_benchmark} (with the multi-modal reverse-waterfilling structure in Theorem~\ref{thm:global_waterlevel_gmm}). Since side information cannot increase the minimum required rate, this yields a lower bound on the true mixture RD function.

\begin{theorem}[Genie-aided lower bound]
\label{thm:genie_lb}
Let $R^\star(D)$ denote the exact RD function of the Gaussian-mixture CSI source in \eqref{eq:gmm_csi_rd}. If the geometry state $C$ is revealed for free to both the encoder and decoder, then the minimum required rate is exactly $R_{\rm cond}(D)$. Consequently,
\begin{equation}
\label{eq:genie_lower_bound}
R^\star(D)\ge R_{\rm cond}(D).
\end{equation}
\end{theorem}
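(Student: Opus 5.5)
The plan has two parts. First we identify the genie-aided RD function as $R_{\rm cond}(D)$. Then we show that revealing $C$ cannot increase the required rate.

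\emph{Step 1 (genie-aided RD function).} With $C$ available at both ends, the pairs $(\mathbf h_t,C_t)$ are i.i.d., and the conditional RD theorem with common side information gives
\[
R_{\mathbf h|C}(D)=\frac{1}{N}\min_{p(\hat{\mathbf h}|\mathbf h,C):\ \frac1N\E\|\mathbf h-\hat{\mathbf h}\|^2\le D} I(\mathbf h;\hat{\mathbf h}\mid C).
\]
Write the objective as $\sum_c\pi_c I(\mathbf h;\hat{\mathbf h}\mid C=c)$ and split the distortion as $D_c=\frac1N\E[\|\mathbf h-\hat{\mathbf h}\|^2\mid C=c]$. The test channels for different $c$ are chosen independently, so we may first minimize each term for fixed $D_c$ and then minimize over $\{D_c\}$ subject to $\sum_c\pi_cD_c\le D$.

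For fixed $c$, the inner minimum is the RD function of $\CN(\mathbf 0,\mathbf R_c)$. By Proposition~\ref{prop:gaussian_csi_rd} this equals $R_c(D_c)$, given by \eqref{eq:Rc_rate}--\eqref{eq:Rc_dist}. Substituting yields exactly \eqref{eq:Rcond_def}, and Theorem~\ref{thm:global_waterlevel_gmm} supplies the single-water-level form. For the operational meaning, the converse is the standard conditional converse: $nNR\ge H(s_n)\ge I(\mathbf h^n;\hat{\mathbf h}^n\mid C^n)$, followed by the memoryless chain-rule single-letterization and convexity of $R_{\rm cond}$. Achievability follows by coding the subsequences $\{t:C_t=c\}$ separately with Gaussian codes at distortion $D_c$.

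\emph{Step 2 (side information cannot help).} Take any test channel $p(\hat{\mathbf h}\mid\mathbf h)$ that is feasible for \eqref{eq:csi_rd_function}, and form the joint law $p(c)p(\mathbf h\mid c)p(\hat{\mathbf h}\mid\mathbf h)$. This gives the Markov chain $C-\mathbf h-\hat{\mathbf h}$, so $I(C;\hat{\mathbf h}\mid\mathbf h)=0$. The chain rule then gives
\[
I(\mathbf h;\hat{\mathbf h})=I(\mathbf h,C;\hat{\mathbf h})=I(C;\hat{\mathbf h})+I(\mathbf h;\hat{\mathbf h}\mid C)\ge I(\mathbf h;\hat{\mathbf h}\mid C).
\]
The same channel, viewed as a map depending on $(\mathbf h,C)$ that ignores $C$, is feasible for the conditional problem with the same distortion. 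Taking the infimum over feasible channels yields $R^\star(D)\ge R_{\mathbf h|C}(D)=R_{\rm cond}(D)$, which is \eqref{eq:genie_lower_bound}.

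The main obstacle is the exchange of minimizations in Step 1. We must show that optimizing the distortion split $\{D_c\}$ jointly with the per-component test channels gives exactly the convex program \eqref{eq:Rcond_def}, with no time-sharing gap. This holds because each $R_c$ is convex, so any randomized split is dominated by its average, and the per-$c$ problems decouple. All remaining steps are routine uses of the chain rule.
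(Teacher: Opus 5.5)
Your proposal is correct and follows the paper's proof. Both arguments combine the fact that common side information cannot increase the rate with the same per-state decomposition of $I(\mathbf h;\hat{\mathbf h}\mid C)$ and of the distortion, which reduces $R_{\mathbf h|C}(D)$ to \eqref{eq:Rcond_def}. Your Markov-chain step ($C-\mathbf h-\hat{\mathbf h}$, giving $I(\mathbf h;\hat{\mathbf h})\ge I(\mathbf h;\hat{\mathbf h}\mid C)$) makes explicit the inequality the paper only cites as standard. Also, the single-letter decoupling over $c$ is exact without any convexity argument, since the objective and the constraint are both sums of terms that each depend only on the $c$-th conditional test channel; convexity is needed only in your operational converse.
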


\begin{proof}
Let $R_{\mathbf h|C}(D)$ denote the RD function when $C$ is available to both terminals at no rate cost:
\begin{equation}
R_{\mathbf h|C}(D)
=
\frac{1}{N}
\inf_{p(\hat{\mathbf h}|\mathbf h,C):
\ \frac{1}{N}\E\|\mathbf h-\hat{\mathbf h}\|_2^2\le D}
I(\mathbf h;\hat{\mathbf h}\mid C).
\end{equation}
Since providing extra side information cannot increase the minimum required rate, we have the standard inequality
\begin{equation}
R^\star(D)\ge R_{\mathbf h|C}(D).
\end{equation}

Next, expand the conditional mutual information via the law of total expectation:
\begin{equation}
I(\mathbf h;\hat{\mathbf h}\mid C)
=
\sum_{c=1}^{K}\pi_c\, I(\mathbf h;\hat{\mathbf h}\mid C=c),
\end{equation}
and similarly decompose the distortion:
\begin{equation}
\frac{1}{N}\E\|\mathbf h-\hat{\mathbf h}\|_2^2
=
\sum_{c=1}^{K}\pi_c\,
\frac{1}{N}\E\!\left[\|\mathbf h-\hat{\mathbf h}\|_2^2\mid C=c\right].
\end{equation}
Define $D_c\triangleq \frac{1}{N}\E[\|\mathbf h-\hat{\mathbf h}\|_2^2\mid C=c]$ so that $\sum_c \pi_c D_c\le D$. Conditioning on $C=c$, the source is Gaussian with covariance $\mathbf R_c$. As a result, by Theorem \ref{thm:global_waterlevel_gmm}, we obtain
\begin{align}
 R_{\mathbf h|C}(D)
&=
\min_{\{D_c\ge 0\}:\ \sum_c \pi_c D_c\le D}
\sum_{c=1}^{K}\pi_c R_c(D_c)\nonumber\\
&=
R_{\rm cond}(D),   
\end{align}
which implies \eqref{eq:genie_lower_bound}.
\end{proof}

The genie-aided lower bound in \Cref{thm:genie_lb} is the cleanest benchmark one can hope for: if the geometry state $C$ were known to \emph{both} the UE and the BS, then the mixture source becomes conditionally Gaussian and the fundamental rate collapses to the conditional RD function,
\begin{equation}
R_{\mathbf h|C}(D)=R_{\rm cond}(D).
\end{equation}
So $R_{\rm cond}(D)$ is the best possible curve when the slowly varying propagation regime is common knowledge.

A more realistic intermediate regime is decoder-only side information: the BS may infer $C$ (perhaps imperfectly) from long-term context such as uplink tracking of large-scale geometry parameters. This is a Wyner--Ziv setting \cite{WynerZiv}, and the corresponding RD function satisfies the standard ordering
\begin{equation}
R_{\rm cond}(D)\;=\;R_{\mathbf h|C}(D)\;\le\;R_{\rm WZ}(D)\;\le\;R^\star(D),
\end{equation}
i.e., not knowing $C$ at the encoder can only make compression harder.

Finally, label-aware GMTC can be viewed as a simple way to \emph{manufacture} common side information: it explicitly sends the state index, at a normalized cost $H(C)/(\tau N)$. If the BS already has correlated information about $C$, then the label overhead can, in principle, be reduced to the residual uncertainty of the state at the decoder (a conditional-entropy type cost rather than $H(C)$). In the limiting case where the BS essentially knows $C$, the label cost vanishes and the operational target is exactly the genie-aided curve $R_{\rm cond}(D)$.

\subsection{Label-Aware GMTC Achievability Upper Bound}
\label{subsec:oracle_ub}

We now derive an achievable RD bound using the proposed GMTC scheme. In this bound, we assume that the geometry-state label is compressed losslessly and is perfectly shared between the encoder at the UE and the decoder at the BS.

\begin{theorem}[Label-aware GMTC achievable upper bound]
\label{thm:oracle_achievable}
Assume that the latent state remains constant over $\tau$ consecutive CSI feedback intervals and that its label is fed back losslessly once per state-coherence interval. Then the following rate is achievable:
\begin{equation}
\label{eq:gmtc_achievable_bound}
R_{\rm GMTC}(D)
=
R_{\rm cond}(D)
+
\frac{H(C)}{\tau N},
\end{equation}
where $H(C)=-\sum_{c=1}^{K}\pi_c \log_2 \pi_c$.
\end{theorem}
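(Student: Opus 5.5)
The plan is to build an explicit two-part (label + conditional Gaussian) code and bound its rate by the two contributions separately. The source model is read as follows: $C$ is drawn i.i.d. once per block of $\tau$ feedback intervals, and within that block the $\tau$ CSI vectors are conditionally i.i.d. $\CN(\mathbf 0,\mathbf R_C)$. Consider a super-block consisting of $n$ state-coherence intervals, i.e., $n\tau$ CSI vectors. The encoder first conveys the label sequence $C^n$ losslessly. By the lossless source coding theorem (e.g., a typical-set or Huffman/arithmetic code matched to $\{\pi_c\}$), this costs at most $nH(C)+o(n)$ bits. Spread over the $n\tau N$ complex dimensions, the label rate is $H(C)/(\tau N)+o(1)$. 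The label is recovered without error at the BS, so after this stage $C$ is common side information at both terminals.

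Conditioned on the decoded labels, partition the $n\tau$ CSI vectors into $K$ sub-sequences according to their state. The sub-sequence for state $c$ has length $n_c\approx n\tau\pi_c$, and by the law of large numbers $n_c/(n\tau)\to\pi_c$. Each sub-sequence is i.i.d. $\CN(\mathbf 0,\mathbf R_c)$. We apply the single-Gaussian achievability of Proposition~\ref{prop:gaussian_csi_rd} to it, realized through the KLT $\mathbf U_c^{\mathsf H}$ and independent Gaussian codebooks on the eigen-coefficients. Its target distortion $D_c$ is taken from the optimizer of \eqref{eq:Rcond_def}, which by Theorem~\ref{thm:global_waterlevel_gmm} is $D_c=\frac1N\sum_m\min\{\lambda_{c,m},\mu\}$ with the common water level $\mu$. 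This uses $n_cNR_c(D_c)+o(n)$ bits. Summing over $c$ and normalizing by $n\tau N$ gives $\sum_c\pi_cR_c(D_c)+o(1)=R_{\rm cond}(D)+o(1)$. The per-dimension distortion is $\sum_c (n_c/n\tau)D_c\to\sum_c\pi_cD_c\le D$.

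Adding the two streams gives total rate $R_{\rm cond}(D)+H(C)/(\tau N)+o(1)$ at distortion $D$. Letting $n\to\infty$ and using continuity of $R_{\rm cond}$ in $D$ to absorb the vanishing slacks shows that \eqref{eq:gmtc_achievable_bound} is achievable.

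The main obstacle is the randomness of the counts $n_c$. A fixed-rate code of length $n_c$ cannot be designed in advance, and atypical count vectors inflate both rate and distortion. I would handle this in one of two ways. The first option is to declare an error whenever $|n_c/(n\tau)-\pi_c|>\delta$; on that event the encoder sends a zero reconstruction, whose distortion contribution is bounded by $\E\|\mathbf h\|^2$ times a vanishing probability, which is controlled by uniform integrability of Gaussian energy. The second option is a variable-length scheme whose expected rate is bounded directly. In either case the code for state $c$ is designed for length $\lceil n\tau(\pi_c+\delta)\rceil$ and unused slots are padded, so the rate is overshot by at most $O(\delta)$, which is then sent to zero.
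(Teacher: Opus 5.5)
Your proposal is correct and follows essentially the same two-part construction as the paper: a lossless label code costing $H(C)/(\tau N)$ bits per complex dimension, followed by component-matched Gaussian (KLT) codes at the optimal per-state allocation $\{D_c\}$, with rates and distortions averaged over the state frequencies. Your explicit pooling of vectors by state across coherence intervals, and your treatment of the random counts $n_c$ via typicality and padding, make rigorous what the paper's proof only asserts with ``averaging over groups.'' When padding, use dummy $\CN(\mathbf 0,\mathbf R_c)$ samples so that the Gaussian codes' distortion guarantees still apply.
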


\begin{proof}
We construct an explicit coding scheme and track its rate. Consider a blocklength-$n$ code where the latent state is constant over groups of $\tau$ CSI vectors. For simplicity, we assume $n$ is a multiple of $\tau$ and write $n=\tau L$. Let $C_\ell$ denote the state in group $\ell\in\{1,\dots,L\}$. The encoder transmits the sequence $(C_1,\dots,C_L)$ using an optimal lossless code. By the source coding theorem, for any $\epsilon>0$ and sufficiently large $L$, the label sequence can be transmitted with expected codelength at most $L(H(C)+\epsilon)$ bits. Normalizing by $nN=\tau L N$ gives a normalized label rate of at most $\frac{H(C)+\epsilon}{\tau N}$ bits per complex dimension.

Fix any distortion allocation $\{D_c\}$ satisfying $\sum_c \pi_c D_c\le D$. For each component $c$, by Shannon's RD theorem applied to the Gaussian source $\CN(\mathbf 0,\mathbf R_c)$, there exist (for sufficiently large $n_c$) codes that achieve rate arbitrarily close to $R_c(D_c)$ while meeting distortion $D_c$. Since the decoder knows $C_\ell$ after decoding the label, encoder and decoder may use the corresponding component-matched Gaussian transform code within each group. Averaging over groups, the expected per-letter coefficient rate is at most $\sum_{c=1}^{K}\pi_c (R_c(D_c)+\epsilon)$.

Combining the two parts, the total achievable rate satisfies
\begin{equation}
R \le \frac{H(C)+\epsilon}{\tau N} + \sum_{c=1}^{K}\pi_c (R_c(D_c)+\epsilon).
\end{equation}
Letting $n\to\infty$ (so $\epsilon\to 0$) and minimizing over all feasible $\{D_c\}$ yields
\begin{equation}
R_{\rm GMTC}(D)
\le
\frac{H(C)}{\tau N} + R_{\rm cond}(D),
\end{equation}
which proves \eqref{eq:gmtc_achievable_bound}.
\end{proof}

\subsection{Gap Analysis}
\label{subsec:consequences}

The following corollary illustrates the conditions under which the bounds coincide.

 \begin{corollary}[Vanishing entropy gap]
\label{cor:gap_vanish}
For every distortion level $D$, the exact RD function satisfies
\begin{equation}
\label{eq:fundamental_sandwich}
R_{\rm cond}(D)
\le
R^\star(D)
\le
R_{\rm GMTC}(D).
\end{equation}
Hence, if the mixture order $K$ and the geometry coherence length $\tau$ are fixed, then as $N\rightarrow \infty$,
\begin{equation}
\label{eq:gap_vanish_simple}
0
\le
R_{\rm GMTC}(D)-R_{\rm cond}(D)
=
\frac{H(C)}{\tau N}
\le
\frac{\log_2 K}{\tau N}
\to 0.
\end{equation}
Therefore, in the large-dimensional regime, the upper and lower bounds coincide asymptotically, and GMTC is asymptotically optimal.
\end{corollary}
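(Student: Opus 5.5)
The plan is to assemble the corollary directly from the two bounds already established, and then add an elementary entropy estimate. First, the left inequality in \eqref{eq:fundamental_sandwich} is exactly Theorem~\ref{thm:genie_lb}: revealing $C$ to both terminals cannot increase the minimum rate, and the resulting conditional RD function equals $R_{\rm cond}(D)$ by the single-water-level characterization of Theorem~\ref{thm:global_waterlevel_gmm}.

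For the right inequality, I would invoke Theorem~\ref{thm:oracle_achievable}. It shows that the rate $R_{\rm cond}(D)+H(C)/(\tau N)$ is operationally achievable at distortion $D$ under the state-coherence assumption. Since $R^\star(D)$ is the infimum of achievable rates at distortion $D$, it follows that $R^\star(D)\le R_{\rm GMTC}(D)$.

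Here I must be careful about the source model, and I expect this to be the main obstacle. Theorem~\ref{thm:oracle_achievable} is proved for a source whose state is constant over blocks of $\tau$ vectors. Under that model, $R^\star(D)$ should be read as the operational RD function of this block-state source, rather than the i.i.d.\ single-letter expression \eqref{eq:csi_rd_function}. I would therefore state explicitly that $R^\star$ is interpreted operationally for the same source in both bounds.

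The lower bound still holds under this reading. The genie argument conditions on the entire state sequence, and given the states the vectors are independent Gaussians with covariances $\mathbf R_{C_\ell}$. The conditional RD function thus again decomposes as $\sum_c\pi_cR_c(D_c)$, minimized over allocations with $\sum_c\pi_cD_c\le D$. This holds because the state frequencies converge to $\pi_c$ by the law of large numbers, and because convexity of the $R_c$ lets time-sharing within a component do no better than a fixed $D_c$. With $\tau=1$ the two readings coincide.

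Finally, the gap computation is immediate. Subtracting the bounds gives
\begin{equation}
R_{\rm GMTC}(D)-R_{\rm cond}(D)=\frac{H(C)}{\tau N}\ge 0 .
\end{equation}
Since $C$ takes $K$ values, $H(C)\le\log_2K$ by the maximum-entropy property of the uniform distribution, which can be shown via Jensen or via nonnegativity of $D(\pi\|\mathrm{unif})$. With $K$ and $\tau$ fixed, $\log_2K/(\tau N)\to0$ as $N\to\infty$, so the sandwich forces $R^\star(D)-R_{\rm cond}(D)\to0$ and GMTC is asymptotically optimal.

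One subtlety in this last step is that $R_{\rm cond}(D)$ itself depends on $N$ through the spectra $\{\lambda_{c,m}\}$. The claim is therefore about the vanishing of the additive gap, not convergence of either curve, and I would phrase the conclusion that way.
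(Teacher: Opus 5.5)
Your proposal is correct and takes the same route as the paper, whose proof simply combines Theorem~\ref{thm:genie_lb} with Theorem~\ref{thm:oracle_achievable}; the bound $H(C)\le\log_2 K$ is the only extra ingredient. Your insistence on reading $R^\star(D)$ as the operational RD function of the $\tau$-block-state source, and re-checking the genie bound for that source, is a real improvement the paper omits: for $\tau>1$ the i.i.d.\ single-letter $R^\star(D)$ of \eqref{eq:csi_rd_function} can exceed $R_{\rm cond}(D)+H(C)/(\tau N)$, because at low distortion the Shannon lower bound gives $R^\star(D)\ge R_{\rm cond}(D)+I(\mathbf h;C)/N$.
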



\begin{proof}
The proof is direct from Theorem \ref{thm:genie_lb} and Theorem \ref{thm:oracle_achievable}.
\end{proof}

Corollary~\ref{cor:gap_vanish} decomposes the feedback cost into a \emph{lossy} part and a \emph{lossless} part. The lossy part is $R_{\rm cond}(D)$, the conditional Gaussian RD function obtained when the active covariance state is known and component-matched TC is used. The lossless part is the normalized label rate $H(C)/(\tau N)$ needed to specify that state. Hence multi-modality does not force per-block covariance or basis feedback; it only introduces the label overhead. Since $N=N_tN_c$ is large in wideband massive MIMO--OFDM, the label term is typically negligible, and for fixed $K$ and $\tau$ the bounds become tight as $N\to\infty$. Increasing the mixture size improves modeling accuracy but increases the worst-case overhead only logarithmically through $H(C)\le \log_2 K$.

\section{Simulation Result}
\label{sec:simulation}

In this section, we validate the performance of the proposed GMTC framework via two sets of numerical experiments. First, we utilize synthetic CSI samples generated from a Gaussian-mixture distribution to assess the fundamental gains of the proposed framework. Second, we evaluate the performance of GMTC framework using the COST2100 dataset \cite{COST2100} to validate its effectiveness in realistic CSI environments. Furthermore, to assess its viability for deployment in FDD systems, we provide a comprehensive analysis of resource consumption, including model size, UE-side inference complexity, and online memory overhead.


\begin{table*}[t]
\centering
\caption{Parameter count, online memory, and encoder FLOPs for the synthetic Gaussian-mixture experiment for $N=64$.}
\label{tab:synthetic_complexity_full}
\begin{tabular}{llrccc}
\toprule
Method & $K$ & Scale & \# Parameters & Online Memory (Byte) & Encoder FLOPs \\
\midrule
TC              & -- & -- & $2N^2 + 2N + C^{\ast}$         & $8N^2$   & $N^2 + 4N$ \\
GMTC & -- & -- & $2KN^2 + 2KN + C^{\ast}$        & $8KN^2$  & $N^2 +  4N$ \\
\midrule
\multirow{9}{*}{Swin-ViT-NTC$^{\dagger\ddagger}$}
  & \multirow{3}{*}{$8$}
    & $1\times $  & $32\times10^3$   & $9.9\times10^6$  & $215\times10^3$ \\
  & & $8\times $  & $266\times10^3$  & $10.8\times10^6$ & $1.8\times10^6$ \\
  & & $64\times $ & $1.8\times10^6$  & $17.1\times10^6$ & $12.8\times10^6$ \\
\cmidrule{2-6}
  & \multirow{3}{*}{$16$}
    & $1\times $  & $55\times10^3$   & $9.9\times10^6$  & $375\times10^3$ \\
  & & $8\times $  & $634\times10^3$  & $12.3\times10^6$ & $4.3\times10^6$ \\
  & & $64\times $ & $4.1\times10^6$  & $26.3\times10^6$ & $28.7\times10^6$ \\
\cmidrule{2-6}
  & \multirow{3}{*}{$64$}
    & $1\times $  & $266\times10^3$  & $10.8\times10^6$ & $1.8\times10^6$ \\
  & & $8\times $  & $1.8\times10^6$  & $17.1\times10^6$ & $12.8\times10^6$ \\
  & & $64\times $ & $12.9\times10^6$ & $61.7\times10^6$ & $90.5\times10^6$ \\
\midrule
\multicolumn{3}{l}{\textbf{Ratio (NTC 64$\times$ / GMTC, $K=64$)}}
  & $\mathbf{24\times}$
  & $\mathbf{29\times}$
  & $\mathbf{20795\times}$ \\
\bottomrule
\multicolumn{6}{l}{$^{\ast}$ $C=2{,}000$ denotes the size of the uniform quantization lookup table shared across all rate points.} \\
\multicolumn{6}{l}{$^\dagger$ Parameter count is per rate-point model.} \\
\multicolumn{6}{l}{$^\ddagger$ Online memory is measured via \texttt{torch.cuda.max\_memory\_allocated()} and converted to bytes;} \\
\multicolumn{6}{l}{\phantom{$^\ddagger$} encoder FLOPs are measured via \texttt{torch.profiler}.}
\end{tabular}
\end{table*}

\begin{figure*}[t]
\centering
\includegraphics[width=\textwidth]{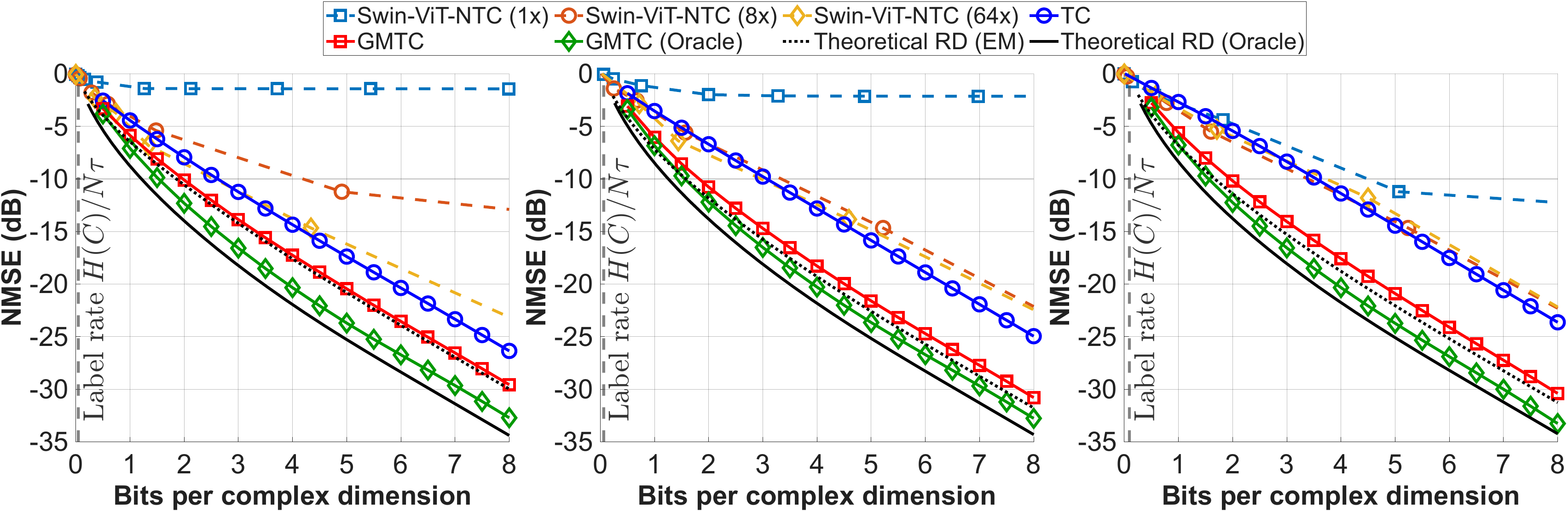}

\vspace{0.3em}
\makebox[\textwidth][l]{%
  \hspace*{1.2em}%
  \makebox[0.32\textwidth][c]{\hspace*{0.5em}(a) $N=64$, $K=8$}%
  \makebox[0.34\textwidth][c]{\hspace*{1.0em}(b) $N=64$, $K=16$}%
  \makebox[0.32\textwidth][c]{\hspace*{1.3em}(c) $N=64$, $K=64$}%
}

\caption{The RD performance validation on synthetic Gaussian-mixture data. Panels (a), (b), and (c) correspond to $N=64$ with $K=8$, $K=16$, and $K=64$, respectively. The oracle benchmarks are invariant to $K$ as the conditional RD limit $R_{\mathrm{cond}}(D)$ is preserved under the uniform state distribution.} 
\label{fig:gmm_rd_validation}
\end{figure*}

\subsection{Performance Metrics and Benchmark Schemes}
To evaluate the RD performance, we employ NMSE and bits per complex dimension as the distortion and rate metrics, respectively, as defined in Sec. \ref{sec:rd_formulation}. 
The performance of the proposed framework is evaluated against the following benchmark schemes:

\begin{itemize}[leftmargin=*,itemsep=4pt]
\item \textbf{Theoretical RD (Oracle)}: The analytical achievable upper bound $R_{\mathrm{GMTC}}(D)$ in \eqref{eq:gmtc_achievable_bound}, computed using the \textit{ground-truth mixture parameters}. This serves as the fundamental architectural limit of the label-aware GMTC.

\item \textbf{Theoretical RD (EM)}: The analytical bound $R_{\mathrm{GMTC}}(D)$ computed using the \textit{EM-estimated parameters}. The gap between this and the oracle curve quantifies the performance loss due to offline modeling errors.

\item \textbf{TC}: Classical single-Gaussian TC based on a single global covariance model, representing the performance under a unimodal assumption.

\item \textbf{GMTC}: The practical, finite-blocklength implementation of the mixture-aware TC using \textit{offline training}, \textit{MAP state selection}, and \textit{entropy-coded KLT coefficients}.

\item \textbf{Oracle-label GMTC}: An empirical benchmark for the GMTC where the \textit{ground-truth mixture parameters} and \textit{state label} are perfectly provided to both the encoder and decoder. 

\item \textbf{Swin-ViT-NTC}: A state-of-the-art (SOTA) transformer-based NTC baseline introduced in \cite{park2025transformer}.

\item \textbf{CsiNet}: A representative convolutional neural network based autoencoder baseline for CSI compression \cite{deeplearning2018}, exclusively for the COST2100 dataset evaluation. 
\end{itemize}

To evaluate the performance across different neural network model capacities, we \textit{scale} the Swin-ViT-NTC baseline by varying its feature dimension. Rather than an arbitrary selection, this scaling strategy was determined through extensive hyperparameter sweeps involving latent dimensions, layer depths, and kernel sizes. Among these, adjusting the feature dimension yielded the most significant performance gains relative to the increase in parameter count. This approach provides a robust basis for benchmarking the NTC model against GMTC under comparable parameter budgets. 

As noted in \cite{park2025transformer}, the Swin-ViT-NTC baseline requires training a separate model for each RD point to achieve its peak performance. While variable rate CSI compression methods\cite{park2024multi,park2025transformer,Liotopoulos2025OFSQ} offer greater flexibility, they typically suffer from a performance degradation due to the inherent trade-off between dedicated optimization and model generalization \cite{liang2022changeable}. Since our primary objective is to evaluate the proposed GMTC framework against SOTA RD limit, we adopt the dedicated Swin-ViT-NTC as our benchmark. Due to space constraints, the detailed architecture, loss functions, and training hyperparameters for Swin-ViT-NTC are described in \cite{park2025transformer}.

 \begin{figure}[t]
    \centering
    \includegraphics[width=0.45\textwidth]{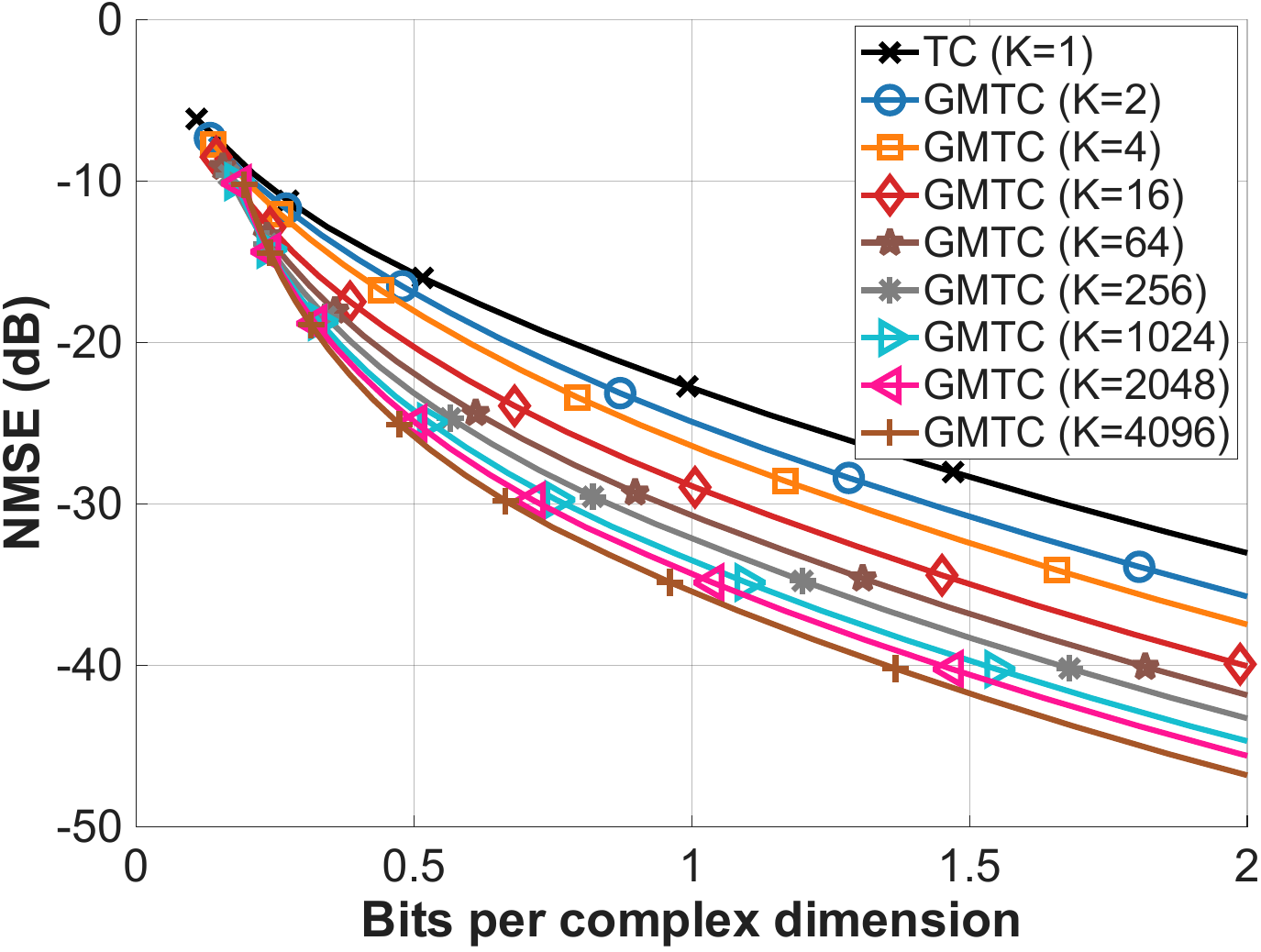}
    \caption{The RD performance of the GMTC scheme when $M=128$ and $K\in \{1,2,4,16,64,256,1024,2048,4096\}$. }
    \label{fig:cost_k_sweep_n128}
    \vspace{-4mm}
\end{figure}

\begin{figure}[t]
    \centering
    \includegraphics[width=0.45\textwidth]{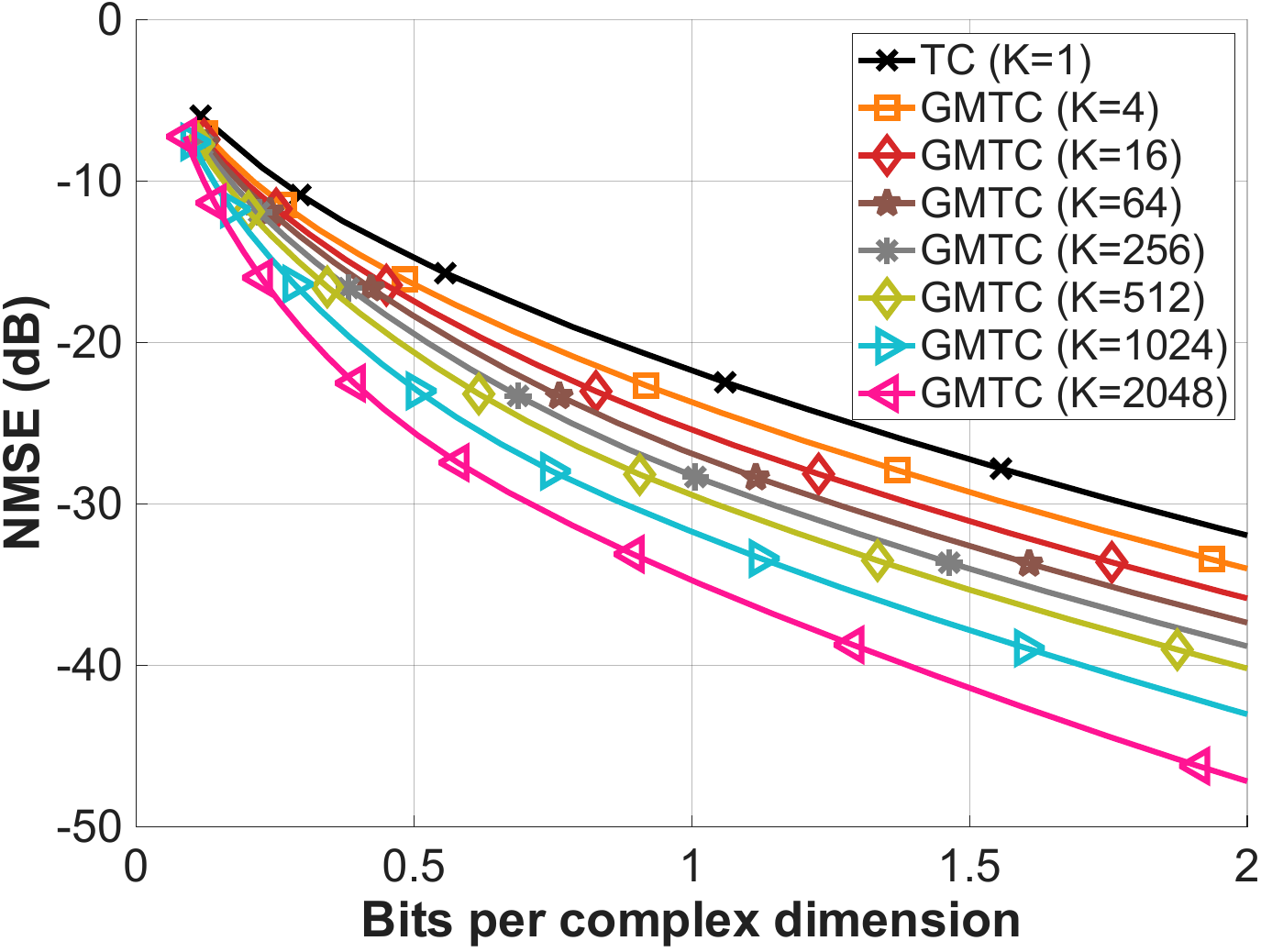}
    \caption{The RD performance of the GMTC scheme when $M=512$ and $K\in \{1,4,16,64,256, 512,1024,2048\}$. }
    \label{fig:cost_k_sweep_n512}
    \vspace{-4mm}
\end{figure}

\subsection{Synthetic Gaussian-Mixture Validation}

We first perform a controlled source-model validation experiment using synthetic Gaussian-mixture data whose structure is deliberately matched to the channel model in Section \ref{sec:csi_channel_model}. Specifically, we generated $K$-state zero-mean proper complex Gaussian-mixture data with $K\in\{8,16,64\}.$ The state probabilities are chosen to be uniform for clarity, i.e., $\pi_c=1/K$. Conditioned on state $C=c$, the synthetic CSI vector is generated as
$\mathbf h \mid (C=c)\sim \CN(\mathbf 0,\mathbf R_c),$
where the covariance is constructed from the same eigenvectors yet with distinct eigenvalues as $\mathbf R_c
= \mathbf F\boldsymbol{\Lambda}_c\mathbf F^{\mathsf H}$, where ${\bf F}\in \mathbb{C}^{N\times N}$ is the Fourier matrix and $\boldsymbol{\Lambda}_c\triangleq \mathrm{diag}(\lambda_{c,1},\dots,\lambda_{c,64})$ whose element $\lambda_{c,m}$ is drawn i.i.d. from a log-uniform distribution, $\lambda_{c,m}\triangleq 10^{\delta_{c,m}},~\mathrm{where}~\delta_{c,m}\sim\mathrm{Unif}[-2,2]$ for $c\in [K]$. 

For each value of $K$, we generate a disjoint training dataset of size $N_{\rm tr}=80{,}000$ and a test dataset of size $N_{\rm te}=20{,}000$. All benchmarks--including the proposed GMTC, the theoretical RD (EM), TC, and the Swin-ViT-NTC baseline-- utilize this training set for offline learning. Specifically, the BS learns the $K$-component Gaussian-mixture dictionary for GMTC and the theoretical RD (EM), while TC learns a single global covariance model. Following the standard approach in neural network-based CSI compression \cite{deeplearning2018,Lightweight_CSI_feedback_2021}, the Swin-ViT-NTC baseline processes the complex wideband channel as an augmented real-valued tensor. This is achieved by stacking the real and imaginary parts of the CSI into a $2\times 8\times 8$ input structure.



Fig.~\ref{fig:gmm_rd_validation} shows the NMSE performance versus the  realized feedback bits per complex dimension, with the corresponding schemes identified in the legend. Additionally, the gray dashed vertical line indicates the lossless label overhead, $H(C)/(\tau N)$, which represents the minimum rate cost for state signaling in GMTC and the theoretical RD benchmarks. Three main trends emerge in Fig.~\ref{fig:gmm_rd_validation}.

{\bf Mismatch Penalty of Single-Gaussian TC:} First, the mismatch penalty of single-Gaussian TC increases with $K$. As the source exhibits greater multi-modality, a single covariance averages over heterogeneous component spectra, resulting in systematically suboptimal bit allocation across transform modes. This appears as a widening gap between TC and the theoretical benchmark as $K$ grows.

{\bf Stability of the GMTC-to-Oracle Gap:}
Second, the performance gap introduced by offline training remains significantly stable. Unlike the TC baseline, the distance between the GMTC and the oracle limit remains within a consistent $3 \mathrm{dB}$ margin regardless of $K$. This confirms that EM-based modeling effectively captures the essential spectral diversity, preventing the performance degradation observed in single-Gaussian TC.

{\bf Comparison with Swin-ViT-NTC:} Third, the proposed GMTC consistently outperforms the Swin-ViT-NTC baseline, with the performance gap widening significantly as the source complexity $K$ increases. Fig. \ref{fig:gmm_rd_validation} shows that the $1\times$ NTC variant fails to match even the conventional TC baseline, while the $64\times$ variant barely approaches TC. As highlighted in Table \ref{tab:synthetic_complexity_full}, even when compared against $64\times$ NTC variant--which consumes \textbf{24$\times$ more parameters}, \textbf{29$\times$ more online memory}, and a staggering \textbf{20795$\times$ higher encoder FLOPs}--GMTC maintains a clear and substantial lead in RD performance. This demonstrates that for multi-modal CSI sources, the explicit state-adaptive second-order structure exploited by GMTC is more parameter-efficient and computationally effective than high-capacity nonlinear neural transforms.

 

\subsection{COST2100 Indoor CSI Benchmark}

We next evaluate the proposed framework using the COST2100 indoor channel model at 5.3 GHz with $(N_t,N_c)=(32,32)$\cite{COST2100}. For this dataset, the total dimensionality of the augmented real-valued CSI is $2N = 2N_tN_c=2048$. Similar to the synthetic Gaussian-mixture data, the CSI data is reshaped into a two-channel tensor of size $2 \times 32 \times 32$. This allows the neural networks to exploit the spatial-frequency correlation through their respective convolutional or transformer layers. However, in the proposed GMTC framework, the computational complexity scales quadratically with respect to the input dimension, and training on the full dimension is computationally prohibitive. Therefore, we adopt a practical segmentation strategy and let $M \ll N$ denote the operational block dimension for GMTC. Specifically, the full CSI vector is denoted as
\begin{equation}
\mathbf{x}_{\mathrm{full}} = \begin{bmatrix} \mathrm{vec}(\Re(\mathbf{H}))^{\mathrm{T}}; \ \mathrm{vec}(\Im(\mathbf{H}))^{\mathrm{T}} \end{bmatrix}^{\mathrm{T}} \in \mathbb{R}^{2N},
\end{equation}
and its $i$-th segment $\mathbf{x}^{(i)} \in \mathbb{R}^{M}$ is expressed as
\begin{equation}
\mathbf{x}^{(i)} = \mathbf{x}_{\mathrm{full}} \big[ (i-1)M + 1 : iM \big],\ i = 1, \dots, 2N/M,
\end{equation}
where $M$ is chosen as a divisor of $2N$. This partitioning  enables GMTC to learn its Gaussian-mixture dictionary on segmented $M$-dimensional sub-vectors with tractable computational complexity. 

\begin{table*}[t]
\centering
\caption{Parameter count, online memory, and encoder FLOPs for the COST2100 dataset experiment $M=128, K=64$.}
\label{tab:cost2100_complexity}
\begin{tabular}{lrccc}
\toprule
Method & Scale & \# Parameters & Online Memory (B) & Encoder FLOPs \\
\midrule
TC & -- & $2M^2 + 2M + C^{\ast}$ & $8M^2$ & $M^2 + 4M$ \\
GMTC & -- & $2KM^2 + 2KM + C^{\ast}$ & $8KM^2$ & $M^2 + 4M$ \\
\midrule
\multirow{3}{*}{CsiNet$^{\ddagger}$}
    & $1\times $ & $1.1\times10^6$ & $13\times10^6$ & $1.2\times10^6$ \\
    & $2\times $ & $2\times10^6$ & $17\times10^6$ & $2.1\times10^6$ \\
    & $4\times $ & $4.2\times10^6$ & $25.6\times10^6$ & $4.3\times10^6$ \\
\cmidrule{2-5}
\multirow{3}{*}{Swin-ViT-NTC$^{\dagger\ddagger}$}
    & $1\times $ & $1.1\times10^6$ & $7.2\times10^6$ & $13.5\times10^6$ \\
    & $2\times $ & $2\times10^6$ & $15\times10^6$ & $17.9\times10^6$ \\
    & $4\times $ & $4.2\times10^6$ & $30.3\times10^6$ & $26.5\times10^6$ \\
\midrule
\multicolumn{2}{l}{\textbf{Ratio (NTC 4$\times$ / GMTC, $M=128$, $K=64$)}}
    & $\mathbf{2\times}$
    & $\mathbf{3.6 \times}$
    & $\mathbf{1568\times}$ \\
\bottomrule
\multicolumn{5}{l}{$^{\ast}$ $C=2{,}000$ denotes the size of the uniform quantization lookup table shared across all rate points.} \\
\multicolumn{5}{l}{$^\dagger$ Parameter count is per rate-point model.} \\
\multicolumn{5}{l}{$^\ddagger$ Online memory is measured via \texttt{torch.cuda.max\_memory\_allocated()} and converted to bytes;} \\
\multicolumn{5}{l}{\phantom{$^\ddagger$} encoder FLOPs are measured via \texttt{torch.profiler}.}
\end{tabular}
\end{table*}

Similar to NTC baseline, the CsiNet baseline is also evaluated across various model scales to investigate the impact of model capacity. Specifically, we adopt the original CsiNet encoder and decoder as the base architecture, use a 64-dimensional bottleneck codeword with a sigmoid activation at the encoder output. To adjust the model capacity while keeping the convolutional backbone unchanged, we insert an additional fully connected layer in both the encoder and decoder. By varying the dimension of this layer, we construct model variants whose total parameter counts are approximately $\times 1$, $\times 2$, and $\times 4$ relative to the GMTC dictionary size $\mathcal{O}(KM^2)$. The RD curve is generated by uniformly quantizing the bottleneck representation using different bit widths. Training is performed for 1000 epochs with an initial learning rate of $10^{-3}$, and the learning rate is decayed by a factor of $0.9$ every 100 epochs.


\begin{figure}
\centering
    \includegraphics[width=0.45\textwidth]{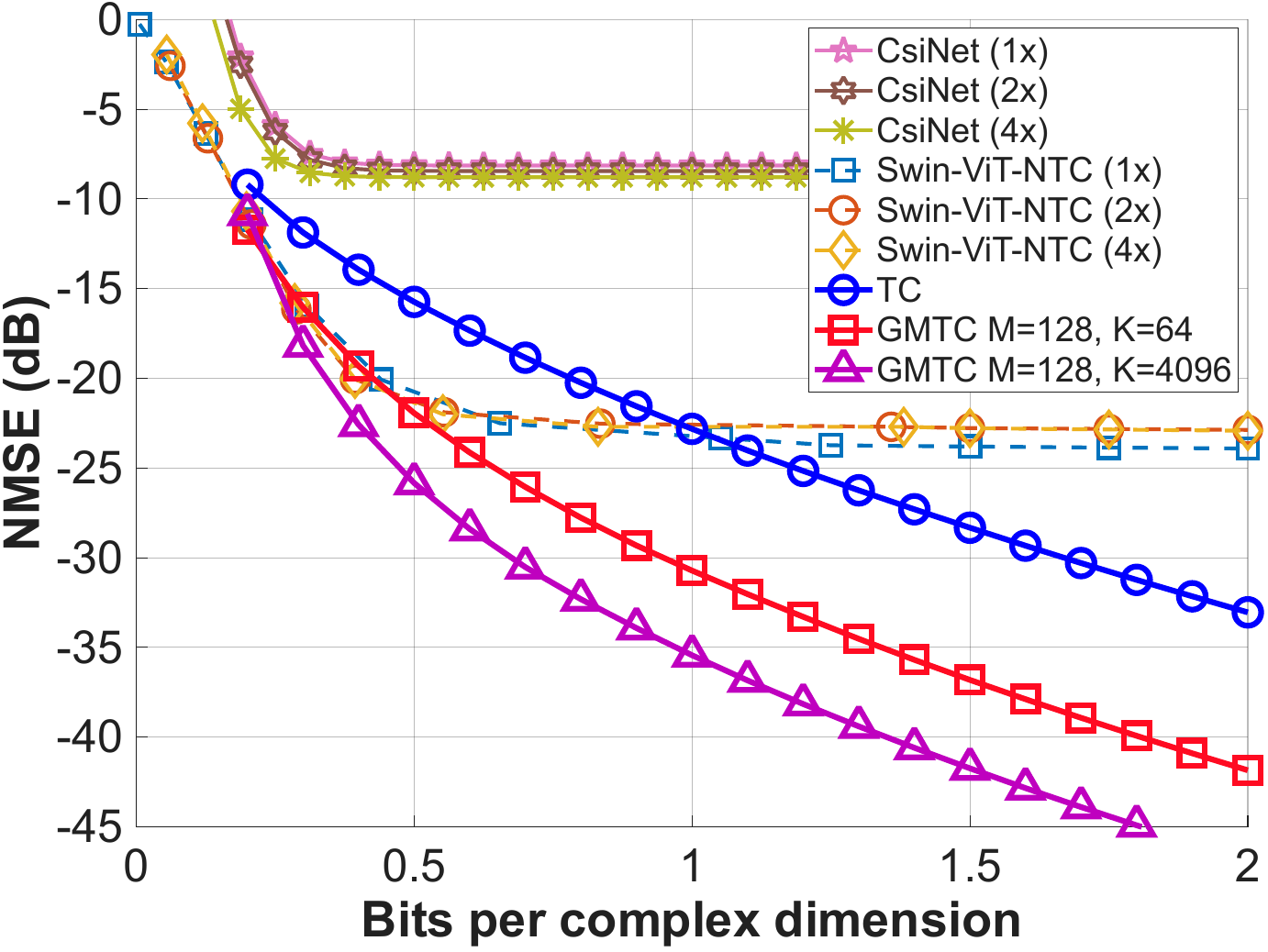}
    \caption{The RD performance validation on COST2100 data.}
    \label{fig:cost_rd_compare}
    \vspace{-4mm}
\end{figure}

{\bf Effect of $K$ and $M$:}
Figs.~\ref{fig:cost_k_sweep_n128}--\ref{fig:cost_k_sweep_n512} evaluate GMTC on COST2100 while fixing the operational blocklength ($M=128$ in Fig.~\ref{fig:cost_k_sweep_n128} and $M=512$ in Fig.~\ref{fig:cost_k_sweep_n512}) and increasing the number of mixture components $K$. Across both blocklengths, increasing $K$ yields a consistent improvement in the RD curve: for any rate, the achieved NMSE decreases monotonically as the mixture becomes richer. This trend supports the central modeling premise of the paper---massive MIMO channel is not well described by a single covariance law, but instead exhibits multiple local second-order regimes. In particular, the performance gap between TC ($K=1$) and GMTC widens as the rate increases: at very low rates, all methods are constrained to represent only coarse structure and the benefit of a more accurate covariance model is limited, whereas at moderate-to-high rates the advantage of state-adaptive, component-matched TC becomes pronounced.

The scaling behavior of these gains depends on the choice of $M$. For $M=128$, the improvement begins to saturate, showing diminishing returns as $K$ increases. In contrast, for $M=512$, further increasing $K$ yields even more significant gains, indicating that larger block lengths capture broader data distributions that necessitate a high number of mixture components for accurate modeling. Nevertheless, extreme configurations such as $M=512, K=2048$ entail prohibitive costs in memory, computation, and storage, given that the dictionary size scales as $O(KM^2)$. From a system-design perspective, $M$ and $K$ must be carefully selected to balance RD performance against hardware resources. While we employed an exhaustive grid search to validate the framework's potential, developing an efficient selection method is a critical open problem reserved for future research.  


{\bf Comparison with Baselines:} 
Fig.~\ref{fig:cost_rd_compare} compares GMTC (e.g., $M=128$, $K=64$ and $M=128$, $K=4096$) with variant baselines. The configuration with $M=128$ and $K=64$ was selected to facilitate a performance comparison under equivalent storage constraints. In this setting, the encoder consists of approximately 1 million parameters, which is comparable to the model sizes of existing neural network-based CSI compression methods\cite{liang2022changeable, park2025transformer,park2024multi, Liotopoulos2025OFSQ}. Among all pairs of M and K yielding roughly 1 million parameters, the combination of $M=128$ and $K=64$ was chosen as it demonstrated superior performance.

As shown in Fig.~\ref{fig:cost_rd_compare}, GMTC with $M=128$ and $K=64$ consistently outperforms NTC (1$\times$) over the entire operating range. While its performance is comparable to the SOTA NTC (2$\times$)  and NTC (4$\times$) at a rate range below 0.5 bits per dimension, at a rate of 1.0 bits per dimension, the NMSE gap between GMTC and baselines expands to nearly 8dB. The potential of the GMTC framework is further exemplified by the configuration with $M=128$ and $K=4096$; compared to the SOTA, it provides an NMSE reduction of roughly 5dB at 0.5 bits per dimension and exceeds a 10dB gain at 1.0 bits per dimension, demonstrating an overwhelming advantage in modeling fidelity. Notably, the performance gap between the GMTC and the neural network-based methods continues to widen as the rate grows. This highlights a fundamental advantage of the proposed framework: while neural CSI compressors--regardless of their backbone architecture or model size--typically suffer from performance saturation in high-rate regimes, GMTC achieves distortion levels that scale proportionally with the bit rate.



{\bf Complexity Comparison:}
A key operational takeaway is that GMTC's \emph{stored} model size scales as $K M^2$ (a shared dictionary of $K$ transforms). However, GMTC's \emph{online} inference cost per CSI block remains essentially the same order as classical TC: once a state is selected, the encoder/decoder applies only \emph{one} $M$-dimensional transform and performs coefficient-wise quantization/entropy coding.
Therefore, the dominant per-block cost is $O(M^2)$ for a dense transform and can be reduced to $O(M\log M)$ when a fast transform structure is used (e.g., FFT/DCT-type implementations).
In contrast, learned Swin-ViT-NTC methods require running a deep encoder/decoder network for every CSI sample, which typically incurs significantly higher compute and memory traffic at the UE.

\section{Conclusion}
\label{sec:conclusion}
 This paper studied the fundamental limits of CSI compression for FDD massive MIMO–OFDM systems under nonstationary channel statistics. Motivated by geometry-based channel models and empirical CSI datasets, we modeled the stacked wideband CSI vector as a Gaussian-mixture source, where the latent state represents different propagation regimes. This model preserves the local Gaussian structure of massive MIMO channels while capturing the multimodal statistics induced by variations in user location, scattering geometry, and blockage. Based on this model, we proposed GMTC, a practical CSI feedback architecture that combines low-rate state signaling with component-matched TC. GMTC uses a shared covariance dictionary learned offline and performs online state inference followed by TC matched to the selected covariance component. We derived converse and achievability RD bounds for Gaussian-mixture CSI. The analysis shows that the optimal distortion allocation is governed by a single global reverse-waterfilling level across all mixture components. Moreover, the gap between the converse and achievability bounds is bounded by the amortized label entropy, which vanishes as the CSI dimension grows. This provides a simple operational interpretation: the cost of nonstationary channel statistics is not a full transform description but only the number of bits required to identify the active propagation state. Simulation results using both synthetic Gaussian-mixture CSI and the COST2100 dataset confirm the theoretical insights. In particular, the mismatch penalty of single covariance TC grows as channel statistics become more multimodal, whereas GMTC closely approaches the oracle-label benchmark. Compared with representative neural CSI compressors, GMTC achieves a superior RD tradeoff while maintaining significantly lower UE-side complexity, since only a single transform operation is required once the state is identified.

Several directions remain for future work. Improving state inference under limited pilot observations may further close the gap to the oracle benchmark. Exploiting temporal correlation across CSI blocks may reduce the label overhead through predictive state tracking. Structured covariance models could further reduce the dictionary size while retaining adaptivity. Finally, integrating GMTC with channel coding, precoding, and AI-native radio architectures may enable efficient end-to-end CSI feedback solutions for future wireless systems.
\bibliographystyle{IEEEtran}
\bibliography{ref}

\end{document}